\begin{document}
\title{Reachability is Decidable for ATM-Typable Finitary PCF with Effect Handlers}
\titlerunning{Finitary PCF with Effect Handlers and ATM has Decidable Reachability}

\renewcommand*{\thefootnote}{\fnsymbol{footnote}}

\author{Ryunosuke Endo\inst{1} \and Tachio Terauchi\inst{2}}
\institute{Waseda University, Tokyo, Japan\footnote{Author's current affiliation: Mizuho Bank, Ltd.} \\
  \email{minerva@ruri.waseda.jp} \and
  Waseda University, Tokyo, Japan\\
  \email{terauchi@waseda.jp}}

\maketitle

\renewcommand*{\thefootnote}{\arabic{footnote}}
\setcounter{footnote}{0}

\begin{abstract}

It is well known that the reachability problem for simply-typed lambda calculus with recursive definitions and finite base-type values (finitary PCF) is decidable.  A recent paper by Dal Lago and Ghyselen has shown that the same problem becomes undecidable when the language is extended with algebraic effect and handlers (effect handlers).
We show that, perhaps surprisingly, the problem becomes decidable even with effect handlers when the type system is extended with answer type modification (ATM).  A natural intuition may find the result contradictory, because one would expect allowing ATM makes more programs typable. Indeed, this intuition is correct in that there are programs that are typable with ATM but not without it, as we shall show in the paper.  However, a corollary of our decidability result is that the converse is true as well: there are programs that are typable without ATM but becomes untypable with ATM, and we will show concrete examples of such programs in the paper.  Our decidability result is proven by a novel continuation passing style (CPS) transformation that transforms an ATM-typable finitary PCF program with effect handlers to a finitary PCF program without effect handlers.
Additionally, as another application of our CPS transformation, we show that every recursive-function-free ATM-typable finitary PCF program with effect handlers terminates, while there are (necessarily ATM-untypable) recursive-function-free finitary PCF programs with effect handlers that may diverge.  Finally, we disprove a claim made in a recent work that proved a similar but strictly weaker decidability result.
We foresee our decidability result to lay a foundation for developing verification methods for programs with effect handlers, just as the decidability result for reachability of finitary PCF has done such for programs without effect handlers.
\end{abstract}


\section{Introduction}

A popular approach to the verification of infinite-state programs is to abstract the programs so that their base-type values are over finite domains, as seen in, for example, predicate abstraction with CEGAR~\cite{DBLP:conf/popl/BallR02,DBLP:conf/cav/ClarkeGJLV00,DBLP:conf/pldi/BallMMR01,DBLP:conf/popl/Terauchi10,DBLP:conf/popl/OngR11,DBLP:conf/pldi/KobayashiSU11,DBLP:conf/popl/UnnoTK13,DBLP:conf/popl/RamsayNO14,DBLP:conf/lics/PodelskiR04,DBLP:conf/pldi/CookPR06,DBLP:conf/popl/CookGPRV07,DBLP:conf/esop/KuwaharaTU014,DBLP:conf/popl/MuraseT0SU16}.  Importantly, the reachability problem, which asks whether there exists an execution of the program reaching a certain program state (typically a designated ``error'' state), is known to be decidable for such programs when they are simply-typable, even when the programs contain higher-order and recursive functions.  That is, reachability for finitary PCF is decidable~\cite{DBLP:conf/lics/Ong06,DBLP:conf/popl/Kobayashi09,DBLP:conf/fossacs/Tsukada014}.\footnote{The result can be seen as an extension of the decidability result for reachability of pushdown systems~\cite{DBLP:conf/cav/EsparzaHRS00}, which correspond to first-order recursive programs, to higher-order recursive programs.  Also, the result should not be confused with the result that observational equivalence for finitary PCF is undecidable~\cite{DBLP:journals/tcs/Loader01}.}

Meanwhile, algebraic effects and handlers (\emph{effect handlers} henceforth) are a programming language feature for expressing computational effects such as mutable state, exception, and non-determinism~\cite{DBLP:journals/entcs/Pretnar15}.  They have a theoretical origin, stemming from the research on denotational semantics~\cite{DBLP:conf/fossacs/PlotkinP01,DBLP:conf/fossacs/PlotkinP02,DBLP:journals/acs/PlotkinP03,DBLP:conf/lics/PlotkinP08,DBLP:conf/esop/PlotkinP09,DBLP:journals/corr/abs-1807-05923}, but are also practically popular and have been incorporated into many popular programming languages such as C, C++, Java, OCaml, and Haskell~\cite{DBLP:conf/aplas/Leijen17,DBLP:journals/pacmpl/BrachthauserSO18,DBLP:conf/haskell/XieL20,DBLP:conf/pldi/Sivaramakrishnan21,DBLP:journals/pacmpl/GhicaLBP22,DBLP:journals/pacmpl/Alvarez-Picallo24}.  Unfortunately, a recent paper~\cite{DBLP:journals/pacmpl/LagoG24} has shown that reachability, which is decidable for finitary PCF as mentioned above, becomes undecidable when the language is extended with effect handlers.

In this paper, we show that this undecidability stems from the way the standard type systems for effect handlers are designed.  More concretely, we show that, perhaps surprisingly, extending the standard type system to allow answer type modification (ATM)~\cite{DBLP:conf/lfp/DanvyF90,DBLP:journals/lisp/Asai09,DBLP:journals/pacmpl/KawamataUST24} recovers decidability.  A natural intuition may find the result contradictory because one would expect allowing ATM makes more programs typable.  Indeed, this intuition is correct in that there are programs that are typable with ATM but not without it, as we shall show in the paper (cf.~Example~\ref{ex:stuntypableatmtypable}).  However, a corollary of our decidability result is that the converse is true as well: there are programs that are typable without ATM but becomes untypable with ATM, and we will show concrete examples of such programs in the paper (cf.~Sections~\ref{subsec:atmuntypableex} and \ref{subsec:termination}).

Our decidability result is proven by a novel continuation passing style (CPS) transformation that transforms an ATM-typable finitary PCF program with effect handlers to a finitary PCF program without effect handlers.  Then, our decidability result follows from the fact that the target of the CPS transformation, finitary PCF, has decidable reachability as mentioned in the first paragraph of this section.

Additionally, as another application of our CPS transformation, we show that every recursive-function-free ATM-typable finitary PCF program with effect handlers terminates, while there are (necessarily ATM-untypable) recursive-function-free finitary PCF programs with effect handlers that may diverge.  Finally, we disprove in a claim made in a recent work that proved a similar but strictly weaker decidability result~\cite{DBLP:journals/pacmpl/Sekiyama024}.  In summary, the main contributions of the paper are as follows.
\begin{itemize}
\item We show that reachability for ATM-typable finitary PCF with effect handlers is decidable.  A novel CPS transformation is introduced to prove the result.
\item A corollary of our decidability result is that there are finitary PCF programs typable without ATM but untypable with it, and we show concrete examples of such programs.
\item As another application of the CPS transformation, we show that recursive-function-free ATM-typable finitary PCF programs with effect handlers always terminate, while there are (ATM-untypable) recursive-function-free finitary PCF programs with effect handlers that may diverge.
\item We disprove a claim made in a recent paper~\cite{DBLP:journals/pacmpl/Sekiyama024} regarding what the paper calls the number of ``active effect handlers''.
\end{itemize}
The rest of the paper is organized as follows.  Section~\ref{sec:prelim} defines preliminary notions.  Section~\ref{sec:mainresults} contains the main results mentioned above.  Section~\ref{sec:related} discusses related work.  Section~\ref{sec:conc} concludes the paper.  For space, some materials are deferred to \full{the appendix}{the extended report~\cite{fullversion}}.

\section{Preliminaries}
\label{sec:prelim}
\begin{figure}[t]
\[
\begin{array}{rcl}
v & \mathrm{::=} & x \mid () \mid \tttrue \mid \ttfalse \mid \lambda x.c \mid \ttrec{x}{v} \\
c & \mathrm{::=} & \ttreturn\;v \mid \itop\;v \mid v_1\:v_2 \mid \ttif{v}{c_1}{c_2} \mid \ttlet{x}{c_1}{c_2} \mid \ttwithhandle{h}{c} \\
h & \mathrm{::=} & \{ \ttreturn\:x = c, \overline{\itop_i\:x_i\:k_i = c_i} \}
\end{array}
\]
\caption{The syntax of $\plname$.}
\label{fig:syntax}
\end{figure}

Figure~\ref{fig:syntax} shows the syntax of untyped finitary PCF with effect handlers $\plname$.  As in many presentations of effect handlers~\cite{DBLP:journals/entcs/Pretnar15,DBLP:journals/pacmpl/KawamataUST24}, we adopt the approach of call-by-push-value $\lambda$ calculus~\cite{DBLP:books/sp/Levy2004} to separate the syntax of expressions into \emph{values}, ranged over by meta variables $v$, and \emph{computations}, ranged over by $c$.  A {\em handler}, ranged over by $h$, consists of a single \emph{return clause} of the form $\ttreturn\;x = c$ and finitely many \emph{operation clauses} of the form $\itop_i\;x_i\;k_i = c_i$.  The parameter $k_i$ in an operation clause is called the \emph{continuation parameter}.
Recursive definitions are given by the syntax $\ttrec{x}{v}$ which recursively binds $x$ in $v$.
\texcomment{
We also often use the notation
\[
\mathtt{letrec}\;x_1 = v_1\;\mathtt{and}\;x_2 = v_2\;\mathtt{and}\;\cdots\;x_n = v_n\;\mathtt{in}\;v
\]
for mutually recursive definition, which is a derived form defined inductively as:
\[
\begin{array}{l}
  \mathtt{let}\:x_1 = \ttreturn\:(\ttrec{x_1}{\mathtt{letrec}\:x_2 = v_2\:\mathtt{and}\:\cdots\:\mathtt{and}\:x_n = v_n\:\mathtt{in}\:v_1}\:\mathtt{in}\\
  \cdots \\
  \mathtt{let}\:x_n = \ttreturn\:(\ttrec{x_n}{\mathtt{letrec}\:x_1 = v_1\:\mathtt{and}\:\cdots\:\mathtt{and}\:x_{n-1} = v_{n-1}\:\mathtt{in}\:v_n}\:\mathtt{in}\\
  \ttreturn\:v
\end{array}
\]
}

Note that functions, arguments, and conditional expressions are restricted to values.  This does not reduce expressivity because, for example, a function application $c\;v$ can be expressed as $\ttlet{x}{c}{x\;v}$ using a fresh variable $x$.  For convenience, we often assume this convention and allow computations to appear at positions where values are expected.  For example, we may write $x\;y\;z$ for $\ttlet{w}{x\;y}{w\;z}$, adopting the usual convention that function application is left associative.
Conversely, when a value $v$ appears at a position where a computation in expected, we read it as $\ttreturn\;v$.  For example, we may write $\lambda x.\lambda y.x$ for $\lambda x.\ttreturn\;\lambda y.\ttreturn\;x$.
We also write $c_1;c_2$ for $\ttlet{x}{c_1}{c_2}$ where $x$ is not free in $c_2$.
As usual, a \emph{program} is a closed expression.

For concreteness, $\plname$ uses Booleans and unit as base-type values, but our results can be easily adopted to other finite base-type domains such as variant types and integers modulo a constant.  Note that $\plname$ is untyped.  Later in the paper, we present two type systems for it, an ordinary simple type system $\vdashst$ and a type system with answer-type modification $\vdashatm$, and investigate how they affect the decidability of reachability.

\begin{figure}[t]
\[
\begin{array}{c}
  \infer[(\textsc{E-Let})]{\ttlet{x}{c_1}{c} \rightarrow \ttlet{x}{c_2}{c}}{c_1 \rightarrow c_2} \ \
  \infer[(\textsc{E-Ret})]{\ttlet{x}{\ttreturn\:v}{c} \rightarrow c[v/x]}{} \\\\
  \infer[(\textsc{E-LamApp})]{(\lambda x.c)\:v \rightarrow c[v/x]}{} \ \
  \infer[(\textsc{E-RecApp})]{(\ttrec{x}{v})\:v' \rightarrow v[(\ttrec{x}{v})/x]\:v'}{} \\\\
    \infer[(\textsc{E-IfTrue})]{\ttif{\tttrue}{c_1}{c_2} \rightarrow c_1}{} \ \
    \infer[(\textsc{E-IfFalse})]{\ttif{\ttfalse}{c_1}{c_2} \rightarrow c_2}{} \\\\
    \infer[(\textsc{E-Han})]{\ttwithhandle{h}{c} \rightarrow \ttwithhandle{h}{c'}}{c \rightarrow c'} \\\\
    \infer[(\textsc{E-HRet})]{\ttwithhandle{h}{\ttreturn\:v} \rightarrow c[v/x]}{\ttreturn\:x = c \in h}\\\\
    \infer[(\textsc{E-Op})]{\ttwithhandle{h}{E[\itop\:v]} \rightarrow c[v/x,\lambda y.\ttwithhandle{h}{E[\ttreturn\:y]}/k]}{\itop\:x\:k = c \in h}
\end{array}
\]
\caption{The semantics of $\plname$.}
\label{fig:semantics}
\end{figure}

Figure~\ref{fig:semantics} shows the operational semantics of $\plname$.  Here, the \emph{evaluation context} $E$ is defined by: $E\:\textrm{::=}\:[\:] \mid \ttlet{x}{E}{c}$.  The semantics is standard for a language with effect handlers.  A key rule is $(\textsc{E-Op})$ which invokes an operation.  An operation invocation is quite different from an ordinary function call, and replaces the current context up to and including the nearest with-handle block with the body of the operation clause $c$.  The actual argument $v$ gets bound to the formal parameter $x$ and the continuation parameter $k$ gets the captured \emph{delimited continuation} $\lambda y.\ttwithhandle{h}{E[\ttreturn\;y]}$. The behavior is similar to that of the \emph{shift} operation from delimited control~\cite{DBLP:conf/lfp/DanvyF90,DBLP:journals/lisp/Asai09}.  Note that the evaluation context $E$ in the delimited continuation is wrapped in the with-handle block, following the standard \emph{deep-handler} semantics~\cite{DBLP:conf/icfp/KammarLO13}.
%
%
Another important rule is $(\textsc{E-HRet})$ which processes a return clause invocation.  Note that unlike the ordinary return of $(\textsc{E-Ret})$, a return at a tail position of a with-handle block invokes the return clause so that the returned result is (the evaluation result of) $c[v/x]$ rather than $v$.
%
%
As standard, we write $\rightarrow^*$ to denote the reflexive transitive closure of $\rightarrow$.  The \emph{reachability problem} is defined as follows.
\begin{definition}[Reachability]
  \normalfont
The \emph{reachability problem} is to decide, given a program $c$, if $c \rightarrow^* \ttreturn\:\tttrue$.
\end{definition}
We note that the choice of $\tttrue$ is arbitrary. We could have alternatively chosen any other base-type value as the final value that we would like to decide if the program returns or not.

\begin{example}
\label{ex:loop}
Let $c_{\it ex1}$ be the following program.\footnote{Recall the syntactic sugar such as the notation $c_1;c_2$ remarked earlier.}
\[
\begin{array}{l}
  (\mathtt{with}\;h_{\it state}\;\mathtt{handle}\;(\mathtt{rec}\:\mathit{loop} = \;\lambda \_.\\
  \hspace{4cm} \mathtt{let}\;n = \mathit{get}\;()\;\mathtt{in}\\
  \hspace{4cm} \mathtt{if}\;v_{\it fst}\;n\;\mathtt{then}\;\mathtt{if}\;v_{\it snd}\;n\;\mathtt{then}\;\;c_{\it incr};\mathit{loop}\;()\;\mathtt{else}\;()\\
  \hspace{4cm} \mathtt{else}\;c_{\it incr};\mathit{loop}\;())\;(v_{\it tup}\;\ttfalse\;\ttfalse)); \ttreturn\;\tttrue
\end{array}
\]
where
\[
\begin{array}{rcl}
  v_{\it tup} & \defeq & \lambda x.\lambda y.\lambda f.f\;x\;y \ \ \
  v_{\it fst} \defeq \lambda p.p\;\lambda x.\lambda y.x \ \ \
  v_{\it snd} \defeq \lambda p.p\;\lambda x.\lambda y.y\\
  c_{\it inc} & \defeq & \mathtt{let}\;n =  \mathit{get}\;()\;\mathtt{in}\\
  & & \mathtt{if}\;v_{\it fst}\;n\;\mathtt{then}\;\mathtt{if}\;v_{\it snd}\;n\;\mathtt{then}\;()\;\mathtt{else}\;\mathit{set}\;(v_{\it tup}\;\ttfalse\;\tttrue) \\
  & & \mathtt{else}\;\mathtt{if}\;v_{\it snd}\;n\;\mathtt{then}\;\mathit{set}\;(v_{\it tup}\;\tttrue\;\tttrue)\;\mathtt{else}\;\mathit{set}\;(v_{\it tup}\;\tttrue\;\ttfalse) \\
  h_{\it state} & \defeq & \{ \ttreturn\;x = \lambda s.x, \mathit{set}\;x\;k = \lambda s.k\;()\;x, \mathit{get}\;x\;k = \lambda s.k\;s\;s \} 
\end{array}
\]
The handler $h_{\it state}$ adopts the standard state-passing approach for expressing mutable states with effect handlers~\cite{DBLP:journals/entcs/Pretnar15}.  Namely, the operation $\mathit{set}$ updates the current state with the given argument and the operation $\mathit{get}$ returns the current state.  In this program, a state is a pair of Booleans encoding a 2-bit non-negative integer.  We use the standard $\lambda$ calculus encoding of pairs: $v_{\it tup}$, $v_{\it fst}$, and $v_{\it snd}$ respectively creates a pair, projects the first element, and the projects the second element.  The computation $c_{\it inc}$ is effectful and uses $\mathit{get}$ and $\mathit{set}$ to increment the current state by one.  The initial state is set to be zero (i.e., the pair $(\bot,\bot)$), and the recursive function defined by $\ttrec{\mathit{loop}}{\dots}$ repeatedly increments the state until the value becomes one (i.e., $(\top,\bot)$). Because the initial state is zero, one is eventually reached and the program $c_{\it ex1}$ terminates returning $\top$.  The same program would also terminate and return $\top$ if the state was initialized to be one, but it would diverge if the state was initialized to be two (i.e., $(\bot,\top)$) or three (i.e., $(\top,\top)$).  Therefore, the answer to the reachability problem for this program would be yes in the first two cases and be no in the latter two cases.
\end{example}

\begin{example}
\label{ex:nondet}
  
Next, let $c_{\it ex2}$ be the following program, also adopted from \cite{DBLP:journals/entcs/Pretnar15}.
\[
\begin{array}{l}
  
  \mathtt{let}\;r =\;(\mathtt{with}\;h_{\it nd}\;\mathtt{handle}\;\mathtt{let}\;x=\;\ttif{\mathit{dec}\;()}{v_0}{v_1}\;\mathtt{in}\\
  \hspace{4.25cm} \mathtt{let}\;y =\;\ttif{\mathit{dec}\;()}{v_2}{v_3}\;\mathtt{in} \; v_{\it xor}\;x\;y)\;\mathtt{in}\\
  \ttif{r}{(\ttrec{f}{\lambda \_.f\;()})\;()}{\ttreturn\;\tttrue}
\end{array}
\]
where
\[
\begin{array}{rcl}
  v_{\it xor} & \defeq & \lambda x.\lambda y.\ttif{x}{(\ttif{y}{\ttfalse}{\tttrue})}{(\ttif{y}{\tttrue}{\ttfalse})} \\
  v_{\it or} & \defeq & \lambda x.\lambda y.\ttif{x}{\tttrue}{\ttif{y}{\tttrue}{\ttfalse}} \\
  h_{\it nd} & \defeq & \{ \ttreturn\;x=x, \mathit{dec}\;x\;k = v_{\it or}\;(k\;\tttrue)\;(k\;\ttfalse) \}
  \end{array}
\]
Here, $v_0, v_1, v_2, v_3 \in \{ \ttfalse,\tttrue \}$.  The handler $h_{\it nd}$ implements non-deterministic choice by the $\mathit{dec}$ operation whose clause executes the given delimited continuation $k$, which is expected to take and return Booleans, with both $\tttrue$ an $\ttfalse$, and returns the disjunction of the two results.  Therefore, the with-handle block will, for each of the four possibilities where $x$ is bound to $v_0$ or $v_1$ and $y$ is bound to $v_2$ or $v_3$, computes the exclusive-or of $x$ and $y$, and takes the disjunction of the four results.  Therefore, the with-handle block returns $\ttfalse$ if the Booleans $v_0, v_1, v_2, v_3$ are all equal and otherwise returns $\tttrue$, and the program $c_{\it ex2}$ returns $\tttrue$ in the former cases and diverges in the later cases (due to the infinite loop $(\ttrec{f}{\lambda \_.f\;()})\;()$).  Thus, the answer to the reachability problem would be true in the former two cases (i.e., the four Booleans are all $\tttrue$ or all $\ttfalse$) and be no in the latter 14 cases.
\end{example}

\begin{wrapfigure}[5]{r}[0pt]{0.33\textwidth}
    \vspace{-30pt}
\[
\begin{array}{rcl}
  b & \textrm{::=} & \unitty \mid \boolty \\
  \sigma & \textrm{::=} & b \mid \sigma_1 \rightarrow \sigma_2
\end{array}
\]
  \vspace{-20pt}
\caption{The simple types.}
\label{fig:stype}
\end{wrapfigure}
Figure~\ref{fig:stype} defines the simple types. The base types are denoted by $b$.  As usual, the function type constructor $\rightarrow$ associates to the right.
Figure~\ref{fig:stypingrules} shows the typing rules of the simple type system $\vdashst$.    The type system is parameterized by a \emph{signature} $\Sigma$ that assigns types to the operation names.
The rules $(\textsc{St-Unit})$ to $(\textsc{St-Let})$ are standard.  The last four rules concern effect handlers, and they are also standard, matching those studied in prior work~\cite{DBLP:journals/entcs/Pretnar15,DBLP:journals/pacmpl/LagoG24,DBLP:journals/pacmpl/Kobayashi25}.\footnote{Technically, \cite{DBLP:journals/pacmpl/LagoG24,DBLP:journals/pacmpl/Kobayashi25} use weaker type systems that restrict (non-continuation) parameters of operations to base types.  Our main result shows that ATM-typability makes reachability decidable even when no such restriction is imposed (cf.~Sections~\ref{subsec:errorinsekiyamaunno} and \ref{sec:related} for further discussion).}
Importantly, $\textsc{(St-Hdlr)}$ checks if a handler is well-typed by checking the well-typedness of the return clause as well as that of each operation clause.  Note that the type of the return clause body, the types of the operation clause bodies, and the return types of the continuation parameters, are the same type $\sigma'$.  This type $\sigma'$ is called the \emph{answer type}, and the fact that all the answer types in a handler are the same signifies that $\vdashst$ lacks \emph{answer-type modification} (ATM).  We shall show in Section~\ref{sec:mainresults} a type system that has ATM, $\vdashatm$, and show that the feature makes reachability decidable.  

\begin{figure}[t]
  \[
\begin{array}{c}
  \infer[(\textsc{St-Unit})]{\Gamma \vdashst () : \unitty}{} \ \
  \infer[(\textsc{St-Bool})]{\Gamma \vdashst v : \boolty}{v \in \{\tttrue,\ttfalse\}} \ \
  \infer[(\textsc{St-Var})]{\Gamma \vdashst x : \sigma}{x:\sigma \in \Gamma} \\\\
  \infer[(\textsc{St-Lam})]{\Gamma \vdashst \lambda x.c : \sigma \rightarrow \sigma'}{\Gamma,x:\sigma \vdashst c : \sigma'} \ \
  \infer[(\textsc{St-Rec})]{\Gamma \vdashst \ttrec{x}{v} : \sigma}{\Gamma, x:\sigma \vdashst v : \sigma} \\\\
  \infer[(\textsc{St-If})]{\Gamma \vdashst \ttif{v}{c_1}{c_2} : \sigma}{\Gamma \vdashst v : \boolty & \Gamma \vdashst c_1 : \sigma & \Gamma \vdashst c_2 : \sigma} \\\\
  \infer[(\textsc{St-App})]{\Gamma \vdashst v_1\;v_2 : \sigma'}{\Gamma \vdashst v_1 : \sigma\rightarrow\sigma' & \Gamma \vdashst v_2 : \sigma} \ \
  \infer[(\textsc{St-Let})]{\Gamma \vdashst \ttlet{x}{c_1}{c_2} : \sigma'}{\Gamma \vdashst c_1 : \sigma & \Gamma,x:\sigma \vdashst c_2 : \sigma'} \\\\
  \infer[(\textsc{St-Ret})]{\Gamma \vdashst \ttreturn\;v : \sigma}{\Gamma \vdashst v : \sigma} \ \
  \infer[(\textsc{St-Op})]{\Gamma \vdashst \itop\;v : \sigma'}{\Sigma(\itop) = \sigma \rightarrow\sigma' & \Gamma \vdashst v : \sigma}\\\\
  \infer[(\textsc{St-Hdlr})]{\Gamma \vdashst \{ \ttreturn\;x = c, \overline{\itop_i\;x_i\;k_i = c_i} \}: \sigma\rightarrow\sigma'}{\Gamma,x:\sigma \vdashst : c : \sigma' & \overline{\Gamma, x_i : \sigma_i, k_i : \sigma_i'\rightarrow \sigma' \vdashst c_i : \sigma'} & \overline{\Sigma(\itop_i) = \sigma_i\rightarrow \sigma_i'}}\\\\
  \infer[(\textsc{St-Han})]{\Gamma \vdashst \ttwithhandle{h}{c} : \sigma'}{\Gamma \vdashst h : \sigma \rightarrow \sigma' & \Gamma \vdashst c : \sigma}
   \end{array}
\]
\caption{The typing rules of the simple type system $\vdashst$.}
\label{fig:stypingrules}
\end{figure}

\begin{example}
\label{ex:sttypings}
Recall $c_{\it ex1}$ and $c_{\it ex2}$ from Examples~\ref{ex:loop} and \ref{ex:nondet}.  Both programs are $\vdashst$-typable.  For $c_{\it ex1}$, the types of some subexpressions are as follows:
\[
\setlength{\arraycolsep}{1em}
\begin{array}{ccc}
  v_{\it tup} : \boolty\rightarrow\boolty\rightarrow\sigma_{\it t} & v_{\it fst},v_{\it snd} : \sigma_{\it t} \rightarrow \boolty & c_{\it inc} : \unitty
\end{array}
\]
  where $\sigma_{\it t} = (\boolty \rightarrow \boolty \rightarrow \boolty) \rightarrow \boolty$.  And, the typing for the handler can be given by $x: \sigma_{\it t} \vdashst \lambda s.x : \sigma_{\it a}$
  for the return clause, and $x: \sigma_{\it t}, k: \unitty\rightarrow\sigma_{\it a} \vdashst \lambda s.k\;()\;x : \sigma_{\it a}$ and $x: \unitty, k: \sigma_{\it t}\rightarrow\sigma_{\it a} \vdashst \lambda s.k\;s\;s : \sigma_{\it a}$
  for the clauses of $\mathit{set}$ and $\mathit{get}$ respectively, where $\sigma_{\it a} = \sigma_{\it t}\rightarrow\unitty$.  Note that the answer type is $\sigma_{\it a}$ in all clauses, indicating that ATM is not needed to type the example.
  %
  Similarly, $c_{\it ex2}$ can be typed by typing the return clause as
  $x : \boolty \vdashst x: \boolty$, and the $\mathit{dec}$ clause as
  $x : \unitty, k : \boolty\rightarrow\boolty \vdashst v_{\it or}\;(k\;\top)\;(k\;\bot) : \boolty$.  Note that the answer type is $\boolty$ in all clauses in this typing, again indicating that ATM is not needed for typing the example.
  
\end{example}

We say that a $\plname$ program $c$ is $\vdashst$-\emph{typable} if $\vdashst c : \sigma$ for some $\sigma$.
We define untyped finitary PCF (without effect handlers) as the fragment of $\plname$ without $\ttwithhandle{h}{c}$ or $\itop\;v$, and we refer to the fragment by $\plnamenoeff$.  As mentioned in the introduction, the reachability problem for $\vdashst$-typable $\plnamenoeff$ is decidable~\cite{DBLP:conf/lics/Ong06,DBLP:conf/popl/Kobayashi09,DBLP:conf/fossacs/Tsukada014}.
\begin{theorem}[\cite{DBLP:conf/lics/Ong06,DBLP:conf/popl/Kobayashi09,DBLP:conf/fossacs/Tsukada014}]
  \label{thm:noeffdecidable}
Reachability is decidable for $\vdashst$-typable $\plnamenoeff$.
\end{theorem}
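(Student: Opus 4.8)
This theorem is a known result established in the cited works~\cite{DBLP:conf/lics/Ong06,DBLP:conf/popl/Kobayashi09,DBLP:conf/fossacs/Tsukada014}; here we sketch the standard argument. The plan is to reduce reachability of $\vdashst$-typable $\plnamenoeff$ to the decidable model-checking problem for higher-order recursion schemes (HORS). First, observe that a $\vdashst$-typable $\plnamenoeff$ program $c$ is nothing but a simply-typed $\lambda$-term with recursion (a $\lambda Y$-term) over the \emph{finite} base-type domains $\unitty$ and $\boolty$. Exploiting this finiteness, we compile $c$ into a HORS $\mathcal{G}_c$: each base-type constant $()$, $\tttrue$, $\ttfalse$ becomes a nullary terminal symbol; each $\ttif{v}{c_1}{c_2}$ becomes a case-split that, once the finitely-many-valued $v$ has been evaluated to a constant, continues with the selected branch; each $\lambda$-abstraction and application is kept essentially as-is, so that the simple type of $c$ directly induces the sorts of $\mathcal{G}_c$ (and hence its order); and each $\ttrec{x}{v}$ is turned into a recursive rewriting rule of $\mathcal{G}_c$. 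We arrange that $\mathcal{G}_c$ emits a distinguished terminal $\mathbf{e}$ exactly when the simulated evaluation reaches $\ttreturn\;\tttrue$, and emits nothing (or a different terminal) on $\ttreturn\;\ttfalse$, $\ttreturn\;()$, or divergence (in which last case the generated tree is infinite).

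The key correctness step is to prove that the leftmost-outermost rewriting of $\mathcal{G}_c$ faithfully simulates the call-by-push-value operational semantics of Figure~\ref{fig:semantics} restricted to the effect-handler-free fragment: one shows, by induction on reduction sequences in both directions, that $c \rightarrow^* \ttreturn\;\tttrue$ if and only if the tree generated by $\mathcal{G}_c$ contains a node labeled $\mathbf{e}$. This is where one must be careful about the interaction of $(\textsc{E-Ret})$ with the $\ttlet$-context rule $(\textsc{E-Let})$, and about the fact that every base-type subcomputation terminates with a constant before the corresponding conditional is resolved. Given this equivalence, it remains to decide whether the tree generated by $\mathcal{G}_c$ contains a node labeled $\mathbf{e}$. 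This is merely a \emph{reachability} property of the generated tree---a ``trivial-automaton''/safety-style query---so it is decidable even by the weakest form of HORS model checking~\cite{DBLP:conf/popl/Kobayashi09}: one builds a finite intersection type system in which $\mathcal{G}_c$ is typable precisely when $\mathbf{e}$ is unreachable, and typability there is decidable; alternatively it follows a fortiori from Ong's MSO model-checking theorem~\cite{DBLP:conf/lics/Ong06}. Combining the two steps, $c \rightarrow^* \ttreturn\;\tttrue$ is decidable.

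The main obstacle is twofold. At the surface level it is the faithfulness of the compilation $c \mapsto \mathcal{G}_c$: one must check that the scheme's rewriting computes exactly the operational behavior of $c$, that the encoding of the finite base types and of conditionals introduces no spurious or missing behavior, and that the construction is effective and preserves simple-typability so that $\mathcal{G}_c$ is a genuine HORS of a well-defined order. At a deeper level, the genuinely hard ingredient is the decidability of HORS model checking itself---the cited theorem of Ong, or equivalently Kobayashi's intersection-type reformulation---whose proof proceeds via game semantics and traversals, and the insight that the infinite computation tree can be analyzed by a finite structure; we take this as given, as indicated by the citation.
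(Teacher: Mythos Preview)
The paper does not give its own proof of this theorem; it simply cites it as a known result from \cite{DBLP:conf/lics/Ong06,DBLP:conf/popl/Kobayashi09,DBLP:conf/fossacs/Tsukada014} and moves on. Your sketch is a correct and standard outline of the argument underlying those works---compile the finitary PCF program into a higher-order recursion scheme and invoke decidability of HORS model checking---so there is nothing to compare against and nothing to correct.
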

By contrast, as also mentioned in the introduction, a recent paper~\cite{DBLP:journals/pacmpl/LagoG24} has shown that the reachability problem is undecidable for $\vdashst$-typable $\plname$.\footnote{An alternative proof is given in \cite{DBLP:journals/pacmpl/Kobayashi25}.}
\begin{theorem}[\cite{DBLP:journals/pacmpl/LagoG24,DBLP:journals/pacmpl/Kobayashi25}]
Reachability is undecidable for $\vdashst$-typable $\plname$.
\end{theorem}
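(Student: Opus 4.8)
The plan is to reduce an undecidable problem — concretely, the halting problem for deterministic Turing machines, or, more economically, the control-state reachability problem for two-counter Minsky machines, which needs only two unbounded unary counters — to reachability for $\vdashst$-typable $\plname$. The guiding intuition is that effect handlers supply exactly the expressive power that simply-typed finitary PCF lacks. The delimited continuation captured by $(\textsc{E-Op})$ is re-installable and may be invoked any number of times, and under the deep-handler semantics re-invoking it reinstalls the enclosing handler; hence handlers can be stacked to unbounded depth with data flowing in both directions. This lets one realize an \emph{unbounded} linear data structure (a stack, or just a unary counter) over a \emph{finite} alphabet — something $\plnamenoeff$ provably cannot do faithfully, since by Theorem~\ref{thm:noeffdecidable} its reachability is decidable, and the Church-style encodings of such structures would require instantiating the fold at unboundedly many types, i.e.\ polymorphism, which $\vdashst$ does not have. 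Since a Minsky configuration is a finite control state together with the values of two counters (and a Turing configuration a finite state plus two half-tapes, each a stack over the finite tape alphabet), simulating the machine amounts to implementing these counters/stacks with handlers.

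First I would fix a machine $M$ and build a $\plname$ program $c_M$ around a recursive value $\ttrec{\mathit{step}}{\dots}$ that iterates $M$'s transition function: the current control state is carried in a finitary parameter (a tuple of Booleans); the two counters are accessed through operations such as $\mathit{inc}_1,\mathit{dec}_1,\mathit{iszero}_1$ and their index-$2$ variants, with uniform signatures in $\Sigma$ (e.g.\ $\Sigma(\mathit{inc}_1) = \unitty \rightarrow \unitty$, $\Sigma(\mathit{iszero}_1) = \unitty \rightarrow \boolty$; for the Turing-machine variant one instead uses $\mathit{push}_L : \Gamma \rightarrow \unitty$, $\mathit{top}_L : \unitty \rightarrow \Gamma$, etc., with $\Gamma$ encoded as a tuple of Booleans); and $c_M$ evaluates to $\ttreturn\;\tttrue$ precisely when the loop reaches $M$'s accepting state, diverging otherwise. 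The heart of the construction is a handler implementing each counter (resp.\ half-tape): because simply-typed finitary PCF has no type capable of holding an unbounded value, the counter cannot be threaded as an ordinary state value; instead it is encoded \emph{in the shape of the captured delimited continuation}, exploiting that resuming a continuation reinstalls the handler, so that repeated $\mathit{inc}$s produce an ever-deeper continuation recording the counter value and $\mathit{dec}/\mathit{iszero}$ inspect or unwind one layer of it. Two such handlers are composed, with the operations destined for the other counter forwarded outward — which is expressible here by explicit re-raising clauses, since an operation emitted from the body that replaces a consumed with-handle block is handled by the next enclosing handler.

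Second I would define a simulation relation between configurations of $M$ and $\rightarrow$-states of $c_M$, and prove by induction on the length of $M$'s run that $M$ accepts iff $c_M \rightarrow^* \ttreturn\;\tttrue$: the ``only if'' direction because each $M$-step is matched by a bounded number of $\plname$-steps, and the ``if'' direction because $M$ is deterministic and the only way $c_M$ can produce $\ttreturn\;\tttrue$ is along the simulated run. Third — and here the hypothesis is essential — I would exhibit a $\vdashst$ derivation for $c_M$. By $(\textsc{St-Hdlr})$, in each handler the return-clause body, every operation-clause body, and the return type of every continuation parameter must receive the \emph{same} answer type, so the gadget must be written to commit to a single monomorphic answer type; since the data involved ranges only over a finite set and every captured continuation has a uniform type of the form ``(finite-or-$\unitty$) $\rightarrow$ answer'', one fixed answer type suffices and no answer-type modification is needed — which is exactly why the example is $\vdashst$-typable and not merely $\vdashatm$-typable.

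The main obstacle is this last step together with the correctness of multi-shot resumption under deep handlers: the counter/stack gadget has to be designed so that it is simultaneously (i) typable under the single-answer-type discipline of $(\textsc{St-Hdlr})$, (ii) correct when continuations are re-invoked and thereby re-install the handler, so that incremented-then-decremented layers are neither resurrected nor lost, and (iii) compositional enough that two independent counters coexist with correct outward forwarding. Reconciling all three at once is the delicate part; the remaining obligations (the $\vdashst$-typability of the finitary control loop, the simulation bookkeeping) are routine. A somewhat more direct alternative route — the one behind the second cited proof~\cite{DBLP:journals/pacmpl/Kobayashi25} — reduces instead from a machine model already phrased in terms that match the continuation-capture mechanism, but the encoding sketched above is self-contained.
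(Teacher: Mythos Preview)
The paper does not give its own proof of this theorem: it is stated as a citation of \cite{DBLP:journals/pacmpl/LagoG24,DBLP:journals/pacmpl/Kobayashi25}. The closest the paper comes to a proof is the construction of the class $\mathit{MM}$ in Section~\ref{subsec:errorinsekiyamaunno} (with the reduction from Minsky machines completed in the appendix), which the paper explicitly notes is $\vdashst$-typable and has undecidable reachability, and which it describes as an adaptation of Kobayashi's exception-based proof.

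Your plan is a reasonable outline in the spirit of the first cited reference, but it takes a genuinely different route from the paper's $\mathit{MM}$ encoding. You propose to store each counter \emph{in the shape of the captured delimited continuation}, relying on multi-shot resumption and deep re-installation, and then to compose two such handlers with explicit forwarding. The paper's $\mathit{MM}$ encoding instead stores each counter as an ordinary higher-order \emph{value} of type $\unitty\rightarrow\unitty$: zero is $\lambda x.()$, increment wraps the current value as $\lambda y.\mathit{succ}\;x_i$, and decrement/zero-test is a single with-handle block that calls the value and branches on whether it returns or raises $\mathit{succ}$. Crucially, the continuation parameter $k$ is \emph{never used}; handlers act purely as exception catchers, there is only ever one active handler, and no forwarding between handlers is needed because both counters are threaded as explicit arguments to the state functions $f_j$.

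The practical consequence is that the three obstacles you correctly flag as ``the delicate part'' --- single-answer-type typability of a multi-shot gadget, correctness of repeated re-installation, and compositional forwarding for two coexisting counters --- simply do not arise in the paper's construction. The $\vdashst$-typing is immediate (everything has type $\unitty$ or $\unitty\rightarrow\unitty$ or $(\unitty\rightarrow\unitty)\rightarrow(\unitty\rightarrow\unitty)\rightarrow\unitty$), and the simulation argument is a direct one-step correspondence. Your route is not wrong, and it is essentially what \cite{DBLP:journals/pacmpl/LagoG24} does, but it is substantially heavier than what the paper itself exhibits; if you want a self-contained proof matching the paper, the exception-style encoding with higher-order counter values is the shorter path.
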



\section{Main Results}
\label{sec:mainresults}
\begin{wrapfigure}[5]{r}[0pt]{0.35\textwidth}
  \vspace{-30pt}
\[
\begin{array}{rcl}
  \tau & \mathrm{::=} & b \mid \tau \rightarrow \rho\\
  \rho &  \mathrm{::=} & \tau/\pureeff \mid \tau/\rho_1\Rightarrow\rho_2
\end{array}
\]
\vspace{-20pt}
\caption{The ATM types.}
\label{fig:atmtype}
\end{wrapfigure}
Figure~\ref{fig:atmtype} shows the ATM types.  The base types, $b$, remain unchanged from those of ordinary simple types. A \emph{value type}, denoted by $\tau$, is either a base-type or a function type of the form $\tau' \rightarrow \rho$ where $\rho$ is a \emph{computation type}.  A computation type is either of the form $\tau/\pureeff$ expressing a \emph{pure} computation that returns a value of type $\tau$, or of the form $\tau/\rho_1\Rightarrow\rho_2$ expressing an \emph{effectful} computation that changes the answer type from $\rho_1$ to $\rho_2$ and returns a value of  type $\tau$.

\begin{figure}[t]
\[
\begin{array}{c}
\infer[(\textsc{T-LetP})]
      {\Gamma \vdashatm \ttlet{x}{c_1}{c_2} : \tau_2/\pureeff}
      {\Gamma \vdashatm c_1 : \tau_1/\pureeff & \Gamma,x:\tau_1 \vdashatm c_2 : \tau_2/\pureeff}\\\\
\infer[(\textsc{T-LetIp})]
      {\Gamma \vdashatm \ttlet{x}{c_1}{c_2} : \tau_2/\rho_2\Rightarrow\rho_1'}
      {\Gamma \vdashatm c_1 : \tau_1/\rho_1\Rightarrow\rho_1' & \Gamma,x:\tau_1 \vdashatm c_2 : \tau_2/\rho_2\Rightarrow\rho_1}\\\\
\infer[(\textsc{T-Ret})]
      {\Gamma \vdashatm \ttreturn\;v : \tau/\pureeff}
      {\Gamma \vdashatm v : \tau} \ \
\infer[(\textsc{T-Op})]
      {\Gamma \vdashatm \itop\;v : \tau'/\rho_1\Rightarrow\rho_2}
      {\Sigma(\itop) = \tau\rightarrow\tau'/\rho_1\Rightarrow\rho_2 & \Gamma \vdashatm v : \tau}\\\\
\infer[(\textsc{T-Hdlr)}]
      {\Gamma \vdashatm \{ \ttreturn\;x = c, \overline{\itop_i\;x_i\;k_i = c_i}\}}
      {\overline{\Sigma(\itop_i) = \tau_i\rightarrow\tau_i'/\rho_i\Rightarrow\rho_i'} & \overline{\Gamma, x_i:\tau_i, k_i:\tau_i'\rightarrow\rho_i \vdashatm c_i : \rho_i'}}\\\\
\infer[(\textsc{T-Han})]
      {\Gamma \vdashatm \ttwithhandle{h}{c} : \rho'}
      {\Gamma \vdashatm h & \Gamma \vdashatm c : \tau/\rho\Rightarrow\rho' & \Gamma,x:\tau \vdashatm c' : \rho & \ttreturn\;x = c' \in h}\\\\
\infer[(\textsc{T-VSub})]
      {\Gamma \vdashatm v : \tau'}
      {\Gamma \vdashatm v : \tau & \tau \leq \tau'}\ \
\infer[(\textsc{T-CSub})]
      {\Gamma \vdashatm c : \rho'}
      {\Gamma \vdashatm c : \rho & \rho \leq \rho'} 
\end{array}
\]
\caption{Representative typing rules of the ATM type system $\vdashatm$.}
\label{fig:atmtypingrules}
\end{figure}

Figure~\ref{fig:atmtypingrules} shows representative typing rules of the ATM type system $\vdashatm$.  We refer to \full{Appendix~\ref{app:atmtypingrulesfull}}{the extended report~\cite{fullversion}} for the complete set.  The type system is essentially the ATM refinement type system proposed in \cite{DBLP:journals/pacmpl/KawamataUST24}, but without the refinement types aspect that is orthogonal to our paper.
(\textsc{T-LetP}) stipulates that if both $c_1$ and $c_2$ are pure computations then the entire computation is also a pure one.  By contrast, (\textsc{T-LetIp}) says that if $c_1$ is an effectful computation that changes the answer type from $\rho_1$ to $\rho_1'$ and $c_2$ is an effectful computation that changes the answer type from $\rho_2$ to $\rho_1$, then the entire computation is an effectful computation that changes the answer type from $\rho_2$ to $\rho_1'$.  Note that the answer types of the sub-computations are composed in a backward manner (cf.~\cite{DBLP:journals/pacmpl/KawamataUST24} for the explanation).  (\textsc{T-Ret}) and (\textsc{T-Op}) are analogous to the corresponding rules (\textsc{St-Ret}) and (\textsc{St-Op}) of $\vdashst$.  In particular, the latter looks up the type of the operation in the signature $\Sigma$ whose return computation type $\tau'/\rho_1\Rightarrow\rho_2$ is an effectful one that changes the answer type from $\rho_1$ to $\rho_2$.  (\textsc{T-Hdlr}) checks that a handler is well-typed.  Note that, unlike (\textsc{St-Hdlr}) of $\vdashst$, the rule allows the operation clause bodies and their continuation parameters to have different answer types, signifying that $\vdashatm$ allows ATM.  The return clause is typed in (\textsc{T-Han}) and it too is allowed to have a different answer type.  Additionally, (\textsc{T-Han}) stipulates that the type of the with-handle block is changed from that of the return clause body, $\rho$, to $\rho'$ by ATM.  The last two rules, (\textsc{T-VSub}) and (\textsc{T-CSub}), are subsumption rules for the subtyping relation $\leq$.

A key subtyping rule is the following one that allows ``embedding'' a pure computation type into an effectful computation type:
\[
\infer[(\textsc{S-Embed})]
      {\tau_1/\pureeff \leq \tau_2/\rho_1\Rightarrow\rho_2}
      {\tau_1 \leq \tau_2 & \rho_1 \leq \rho_2}
\]
The rule signifies that a pure computation can always be used where an effectful one is expected.  The remaining subtyping rules are defined inductively on the structure of the types.  We refer to \full{Appendix~\ref{app:atmsubtypingrulesfull}}{the extended report~\cite{fullversion}} for the complete set of subtyping rules.
We say that a program $c$ is \emph{$\vdashatm$-typable} if $\vdashatm c : \tau/\pureeff$ for some $\tau$.\footnote{The restriction to pure types for $\vdashatm$-typable programs is for simplicity.  It lets us disregard the case the source program gets stuck with an unhandled operation when showing the correctness of the CPS transformation.  It can be shown that reachability remains decidable for $\vdashatm$-typable programs even without the restriction.}


\begin{example}
\label{ex:atmtypings}
Recall $c_{\it ex1}$ and $c_{\it ex2}$ from Examples~\ref{ex:loop} and \ref{ex:nondet}.  Recall that both programs are $\vdashst$-typable as shown in Example~\ref{ex:sttypings}.  We show that both programs are also $\vdashatm$-typable.  For $c_{\it ex1}$, the typing for the handler can be given by $x:\tau_t \vdashatm \lambda s.x:\rho_a$ for the return clause, $x:\sigma_t,k:\unitty\rightarrow \rho_a \vdashatm \lambda s. k\;()\;x:\rho_a$ for the $\mathit{set}$ clause, and $x:\unitty,k:\tau_t\rightarrow\rho_a \vdashatm \lambda x.k\;s\;s:\rho_a$ for the $\mathit{get}$ clause, where
\[
\begin{array}{rcl}
  \tau_t & = & (\boolty\rightarrow(\boolty\rightarrow\boolty/\pureeff)/\pureeff) \rightarrow \boolty/\pureeff\\
  \rho_a & = & (\tau_t\rightarrow\unitty/\pureeff)/\pureeff
\end{array}
\]
Typing these clauses does not need (\textsc{T-LetIp}) but only (\textsc{T-LetP}) because the computation in the clauses are all pure.\footnote{(\textsc{T-LetP}) is used when the trivial let bindings hidden by the notational convention are expanded (cf.~Section~\ref{sec:prelim}).}  By contrast, the operation invocations of $\mathit{get}$ and $\mathit{set}$ will respectively be given effectful computation types $\tau_t/\rho_a\Rightarrow\rho_a$ and $\unitty/\rho_a\Rightarrow\rho_a$ by (\textsc{T-OP}).  The body of the with-handle block will also be given an effectful computation type $\unitty/\rho_a\Rightarrow\rho_a$ by using (\textsc{T-LetIp}) to compose effectful computation types and giving the recursive function $\mathit{loop}$ the type $\unitty\rightarrow\unitty/\rho_a\Rightarrow\rho_a$.  Therefore, $\vdashatm c_{\it ex1} : \boolty/\pureeff$.
Similarly, $c_{\it ex2}$ can be $\vdashatm$-typed by typing the return clause and the $\mathit{dec}$ clause as $x:\boolty \vdashatm x : \rho_b$ and
$x : \unitty, k:\boolty\rightarrow\rho_b \vdashatm v_{\it or}\;(k\;\tttrue)\;(k\;\ttfalse) : \rho_b$, respectively, where $\rho_b = \boolty/\pureeff$.  The invocations of $\mathit{dec}$ in the with-handle-block body can be given the type $\boolty/\rho_b\Rightarrow\rho_b$ and so can the body itself.
\end{example}

\begin{example}
\label{ex:stuntypableatmtypable}
  The previous example programs $c_{\it ex1}$ and $c_{\it ex2}$ were both $\vdashst$-typable and $\vdashatm$-typable.  Here, we show an example of a program that \emph{needs} ATM to be typed.   That is, the program is $\vdashatm$-typable but not $\vdashst$-typable.  Consider the following program $c_{\it ex3}$.
  \[
  \begin{array}{l}
  \mathtt{let}\;r = {\mathtt{with}\;\{ \ttreturn\;x = x, \itop\;x\;k = k\;x;\tttrue \}\;\mathtt{handle}\; (\itop\;();())}\;\mathtt{in}\\
  \ttif{r}{\tttrue}{\ttfalse}
  \end{array}
  \]
  Note that the program is semantically type safe.  In particular, note that the value that gets bound to $r$ is the Boolean value $\top$ rather than unit.
  The program is $\vdashatm$-typable by giving the invocation of $\itop$ the type $\unitty/(\unitty/\pureeff \Rightarrow \boolty/\pureeff)$
  and giving the variable $r$ the type $\boolty$.  However, it is not $\vdashst$-typable because, in that type system, the return clause body needs to be given the type $\unitty$ whereas the $\itop$ clause body needs to be given the type $\boolty$ and the rule for typing a handler (\textsc{St-Hdlr}) asserts that these types need to be the same.
\end{example}

As shown in the above example, there are programs typable with ATM but not without it.  In fact, one may naturally expect allowing ATM makes more programs typable because it allows an operation invocation to change the type of a with-handle block from that of the return clause body to that of an operation clause body.  Therefore, the following main result of the paper may come as a surprise because, as remarked before, reachability is undecidable for $\vdashst$-typable $\plname$.
\begin{theorem}[Decidability]
\label{thm:main}
Reachability is decidable for $\vdashatm$-typable $\plname$.
\end{theorem}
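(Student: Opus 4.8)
The plan is to prove the theorem by exhibiting a type-directed continuation-passing-style (CPS) transformation $\llbracket\cdot\rrbracket$ that maps an $\vdashatm$-typable $\plname$ program to a $\vdashst$-typable $\plnamenoeff$ program (i.e.\ one with no $\ttwithhandle{h}{c}$ or $\itop\;v$), while preserving the answer to the reachability question; decidability then follows immediately from Theorem~\ref{thm:noeffdecidable}. I would first define the action of $\llbracket\cdot\rrbracket$ on ATM types, by mutual recursion on their structure: a base type is unchanged (so the target stays finitary), $\tau\rightarrow\rho$ becomes $\llbracket\tau\rrbracket\rightarrow\llbracket\rho\rrbracket$, a pure computation type $\tau/\pureeff$ becomes simply $\llbracket\tau\rrbracket$, and an effectful computation type $\tau/\rho_1\Rightarrow\rho_2$ becomes the answer-threaded arrow $(\llbracket\tau\rrbracket\rightarrow\llbracket\rho_1\rrbracket)\rightarrow\llbracket\rho_2\rrbracket$. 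This recursion is well-founded, it introduces no base types other than $\unitty$ and $\boolty$, and — crucially — applied to the finitely many types occurring in a fixed ATM derivation it produces only finitely many simple types.

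Next I would define $\llbracket\cdot\rrbracket$ on terms, guided by an ATM typing derivation: a value becomes a value of the corresponding type, a pure computation becomes a $\plnamenoeff$ computation, and an effectful computation $c:\tau/\rho_1\Rightarrow\rho_2$ becomes a continuation-consuming term $\lambda k.\cdots$ of type $(\llbracket\tau\rrbracket\rightarrow\llbracket\rho_1\rrbracket)\rightarrow\llbracket\rho_2\rrbracket$, with the (now ordinary) types $\rho_1,\rho_2$ playing the role of the answer types. The interesting clauses are (\textsc{T-LetIp}), which composes two continuation-consumers in the backward-threaded manner dictated by the rule; (\textsc{T-Op}), which, using the fixed signature type $\Sigma(\itop)=\tau\rightarrow\tau'/\rho_1\Rightarrow\rho_2$, reifies the current delimited continuation $k$ as a first-class function; (\textsc{T-Hdlr}) and (\textsc{T-Han}), which install the return clause as the initial continuation fed to the transformed handler body and dispatch the operation clauses, with the deep-handler re-installation of $h$ around the captured continuation in $(\textsc{E-Op})$ mirrored by a corresponding re-threading of continuations on the target side; and the subsumption rules (\textsc{T-VSub})/(\textsc{T-CSub}), which become explicit coercion terms, where in particular (\textsc{S-Embed}) becomes the $\eta$-expansion turning a pure value into a trivial continuation-consumer. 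I would then prove a type-preservation lemma — if $\Gamma\vdashatm c:\rho$ then $\llbracket\Gamma\rrbracket\vdashst\llbracket c\rrbracket:\llbracket\rho\rrbracket$ — by induction on the ATM derivation, together with the substitution lemma $\llbracket c[v/x]\rrbracket=\llbracket c\rrbracket[\llbracket v\rrbracket/x]$ (modulo coercion bookkeeping), and I would check coherence, i.e.\ that the choices made in the derivation do not affect the reachability-relevant behaviour of the result.

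Then I would establish operational correctness: for a program $c$ with $\vdashatm c:\boolty/\pureeff$ (so that $\llbracket c\rrbracket$ is a $\plnamenoeff$ computation of type $\boolty$), $c\rightarrow^*\ttreturn\;\tttrue$ if and only if $\llbracket c\rrbracket\rightarrow^*\ttreturn\;\tttrue$; when $\vdashatm c:\tau/\pureeff$ with $\tau\neq\boolty$ both sides are ``no'' by type soundness. The ``only if'' direction is a forward simulation via a relation between source configurations of the form $\ttwithhandle{h_1}{E_1[\cdots\ttwithhandle{h_n}{E_n[r]}]}$ and target terms, checking the cases $(\textsc{E-Let})$, $(\textsc{E-Ret})$, $(\textsc{E-LamApp})$/$(\textsc{E-RecApp})$, the two $(\textsc{E-If}\ldots)$ rules, $(\textsc{E-Han})$, $(\textsc{E-HRet})$, and, most delicately, $(\textsc{E-Op})$, where an operation invocation becomes on the target the application of the transformed clause body to the transformed continuation with the handler wrapped around it exactly as in the source rule. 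For the ``if'' direction, which is needed because reachability is existential, I would use that the target is deterministic up to the standard congruence and that the source, being $\vdashatm$-typable, has no unhandled-operation stuck states, so either it reduces in lockstep or it terminates at a value, and hence the simulation can be run backwards. Combining type preservation (the target is a $\vdashst$-typable $\plnamenoeff$ program), operational correctness, and Theorem~\ref{thm:noeffdecidable} then yields decidability.

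The main obstacle will be the operational correctness of the handler and operation cases: making the deep-handler re-installation correspond precisely to the continuation re-threading in the target, and setting up the source/target correspondence at a granularity robust enough to run the simulation in both directions so as to transfer the \emph{existential} reachability question rather than merely a safety property. A secondary obstacle is that the type-directed transformation must be well-defined despite the subsumption rules (\textsc{T-VSub})/(\textsc{T-CSub}), which calls for a coherence argument or a prior normalization of ATM derivations, and one must verify that the transformation genuinely lands in the finitary fragment — this hinges on the observation that ATM records answer types \emph{inside} the type structure, so that only finitely many simple types arise from a derivation. This last point is exactly where $\vdashatm$-typability, as opposed to mere $\vdashst$-typability, is essential: without it the handler would have to be reified as runtime data of an unboundedly large (effectively recursive) type, so the analogous transformation of an arbitrary $\vdashst$-typable $\plname$ program would not be finitary — consistent with the undecidability of reachability in that setting.
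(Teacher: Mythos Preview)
Your overall strategy is the paper's: a typing-derivation-directed CPS transformation into $\vdashst$-typed $\plnamenoeff$, with explicit coercions for the subsumption rules (in particular (\textsc{S-Embed})), followed by type preservation and a two-way simulation, and then an appeal to Theorem~\ref{thm:noeffdecidable}. The backward direction via determinism plus ``no unhandled-operation stuck states for pure-typed programs'' is also exactly what the paper does.

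There is, however, one genuine gap. Your translation of the effectful computation type is
\[
\llbracket \tau/\rho_1\Rightarrow\rho_2\rrbracket \;=\; (\llbracket\tau\rrbracket\rightarrow\llbracket\rho_1\rrbracket)\rightarrow\llbracket\rho_2\rrbracket,
\]
i.e.\ the shift/reset CPS type that carries only a continuation. For effect \emph{handlers} this is not enough: at an invocation $\itop\;v$ you must produce code that runs the clause body of the \emph{dynamically} enclosing handler, and your type provides no channel through which that handler can reach the point of invocation. Your description of (\textsc{T-Op}) (``reifies the current delimited continuation $k$'') and of (\textsc{T-Hdlr})/(\textsc{T-Han}) (``dispatch the operation clauses'', ``re-threading of continuations'') leaves exactly this step undefined; the deep-handler re-installation in (\textsc{E-Op}) is about the \emph{handler}, not merely the continuation, so ``re-threading of continuations'' does not account for it. The paper closes this gap by also threading the handler as a record: it sets
\[
\llbracket \tau/\rho_1\Rightarrow\rho_2\rrbracket \;=\; \llbracket\Sigma\rrbracket \rightarrow (\llbracket\tau\rrbracket\rightarrow\llbracket\rho_1\rrbracket)\rightarrow\llbracket\rho_2\rrbracket,
\]
translates a handler via (\textsc{T-Hdlr}) into a record $\{\overline{\itop_i = \lambda x_i.\lambda k_i.\llbracket c_i\rrbracket}\}$ of type $\llbracket\Sigma\rrbracket$, translates (\textsc{T-Op}) as $\lambda h.\lambda k.\,h.\itop\;\llbracket v\rrbracket\;k$, and translates (\textsc{T-Han}) by applying the body to this record and to $\lambda x.\llbracket c'\rrbracket$ for the return clause. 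The deep re-installation then falls out because the same $h$ is passed again in the (\textsc{T-LetIp}) clause. Once you add this handler-record argument, the rest of your plan goes through essentially as you wrote it.

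A smaller point: the coherence obstacle you anticipate is not actually needed. The paper fixes one ATM derivation, defines the transformation on that derivation, proves preservation and a one-step forward simulation for \emph{that} derivation (constructing the derivation for $c'$ from the derivation for $c$ via the preservation proof), and gets the backward direction from determinism and type soundness. No comparison between different derivations of the same term is required.
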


The rest of this section is organized as follows.  We prove Theorem~\ref{thm:main} in Section~\ref{subsec:cps} by presenting a novel typing-derivation-directed CPS transformation that transforms an $\vdashatm$-typable $\plname$ program to a $\vdashst$-typable $\plnamenoeff$ program.  A corollary of Theorem~\ref{thm:main} is the existence of a $\plname$ program that is $\vdashst$-typable but not $\vdashatm$-typable, and we present a concrete example of such a program in Section~\ref{subsec:atmuntypableex} (Section~\ref{subsec:termination} also has such an example).  Section~\ref{subsec:termination} describes another application of our CPS transformation.  That is, we show there that every $\vdashatm$-typable recursive-definition-free $\plname$ program terminates while the same does not hold for $\vdashst$-typable ones.  Section~\ref{subsec:errorinsekiyamaunno} disproves a claim made in a recent paper~\cite{DBLP:journals/pacmpl/Sekiyama024} regarding what the paper calls \emph{active effect handlers}.

\subsection{Typing-Derivation-Directed CPS Transformation and the Proof of Theorem~\ref{thm:main}}
\label{subsec:cps}

We prove Theorem~\ref{thm:main} by presenting a CPS transformation that transforms an $\vdashatm$-typable $\plname$ program to a $\vdashst$-typable $\plnamenoeff$ program.  Then, Theorem~\ref{thm:main} follows from the fact that reachability is decidable for $\vdashst$-typable $\plnamenoeff$ programs.
%

We note that a CPS transformation for a language with effect handlers and ATM has already been proposed in a recent work by Kawamata et al.~\cite{DBLP:journals/pacmpl/KawamataUST24}.  However, their CPS transformation requires higher-rank parametric polymorphism in an essential way to type the target of the transformation, and therefore, it is insufficient for our purpose because reachability for $\plnamenoeff$ programs typable with a higher-rank parametric polymorphic type system is undecidable~\cite{DBLP:conf/fossacs/TsukadaK10}.

Our key observation is that the CPS transformation of \cite{DBLP:journals/pacmpl/KawamataUST24} uses parametric polymorphism to allow a pure computation to be used in contexts where an effectful computation is expected.  This is what subtyping is used for in $\vdashatm$.  Based on the observation, and inspired by the CPS transformation proposed in a paper by Materzok and Biernacki~\cite{DBLP:conf/icfp/MaterzokB11}, we propose a new CPS transformation for effect handlers that is subtyping-aware so that the target of the transformation can be typed without parametric polymorphism.  The key idea, like that of \cite{DBLP:conf/icfp/MaterzokB11}, is to make the transformation be directed by the \emph{typing derivation} of the source expression, instead of being only \emph{type}-directed as more commonly seen for a CPS transformation.  

For convenience, we extend $\vdashst$-typed $\plnamenoeff$ with records.  That is, we extend the syntax of expressions and simple types as follows.
\[
\setlength\arraycolsep{0.4cm}
\begin{array}{ccc}
  v\;\mathrm{::=}\;\dots \mid \{\overline{l_i = v_i} \} &
  c\;\mathrm{::=}\;\dots \mid v.l &
  \sigma\;\mathrm{::=}\;\dots \mid \{ \overline{l_i:\sigma_i} \}
\end{array}
\]
The extension to the operational semantics and the typing rules is standard and is given in \full{Appendix~\ref{app:recordext}}{the extended report~\cite{fullversion}}.  We note that the reachability for $\plnamenoeff$ remains decidable with the record extension.\footnote{This can be shown, for example, by encoding records as functions similarly to how it was done for tuples in Example~\ref{ex:loop}.}

\begin{figure}[t]
\[
\begin{array}{l}
  {\sembrack{b} = b \ \ \ \
  \sembrack{\tau\rightarrow\rho} = \sembrack{\tau}\rightarrow\sembrack{\rho} \ \ \ \
  \sembrack{\tau/\pureeff} = \sembrack{\tau}} \\
  \sembrack{\tau/\rho_1\Rightarrow\rho_2} = \sembrack{\Sigma}\rightarrow(\sembrack{\tau}\rightarrow\sembrack{\rho_1})\rightarrow\sembrack{\rho_2} \\
  \sembrack{\{ \overline{\itop_i : \tau_i\rightarrow\tau_i'/\rho_i\Rightarrow\rho_i'}} = \{ \overline{ \itop_i : \sembrack{\tau_i}\rightarrow (\sembrack{\tau_i'}\rightarrow \sembrack{\rho_i} ) \rightarrow \sembrack{\rho_i'}} \}
\end{array}
\]
\caption{The CPS transformation of types.}
\label{fig:cpstypes}
\end{figure}

Figure~\ref{fig:cpstypes} shows the CPS transformation of types.  As seen in the second to the last rule, an effectful computation is transformed to a function that takes a transformed signature (i.e., a record of the type $\sembrack{\Sigma}$), a transformed continuation of the type $\sembrack{\tau} \rightarrow \sembrack{\rho_1}$, and returns a value of the type $\sembrack{\rho_2}$.
For a type environment $\Gamma$, we denote by $\sembrack{\Gamma}$ the CPS transformed environment $\{ x:\sembrack{\tau} \mid x:\tau \in \Gamma \}$.

Next, we define the CPS transformation of subtyping derivations, which is of the form $\sembrack{\tau_1 \leq \tau_2}$ for subtyping of value types and $\sembrack{\rho_1 \leq \rho_2}$ for subtyping of computation types.  An important case is the transformation of a derivation whose root is an instance of (\textsc{S-Embed}) shown below.
  \[
 \sembrack{\tau_1/\pureeff \leq \tau_2/\rho_1\Rightarrow\rho_2} = \lambda x.\lambda h.\lambda k.\sembrack{\rho_1 \leq \rho_2}@(k@(\sembrack{\tau_1 \leq \tau_2}@x)
 \]
where $@$ denotes a \emph{static application} that is processed during the CPS transformation~\cite{DBLP:conf/rta/HillerstromLAS17,DBLP:journals/pacmpl/KawamataUST24}. The complete set of the CPS transformation rules that concern subtyping derivations is given in \full{Appendix~\ref{app:cpssubtyping}}{the extended report~\cite{fullversion}}.

Finally, we define the CPS transformation of typing derivations, which is of the form $\sembrack{\Gamma \vdashatm v : \tau}$ for value expression typing and $\sembrack{\Gamma \vdashatm c: \rho}$ for computation expression typing.  An interesting rule is one concerning the subsumption rule (\textsc{T-CSub}) shown below.
\[
\sembrackstretch{\dfrac{\Gamma \vdashatm c : \rho \ \ \rho \leq \rho'}{\Gamma \vdashatm c : \rho'}} = \sembrackstretch{\rho \leq \rho'}@\sembrackstretch{\Gamma \vdashatm c : \rho}
\]
Note that it uses the transformation obtained from the subtyping $\rho \leq \rho'$ to properly CPS transform the source computation expression $c$ that has been CPS transformed with respect to the sub-derivation $\Gamma \vdashatm c : \rho$.  Another exemplifying transformation rule is one concerning the (\textsc{T-Hdlr}) rule for typing a handler shown below.
\[
\begin{array}{c}
  \left\llbracket \dfrac{\overline{\Sigma(\itop_i) = \tau_i\rightarrow\tau_i'/\rho_i\Rightarrow\rho_i'} \ \ \overline{\Gamma, x_i:\tau_i, k_i:\tau_i'\rightarrow\rho_i \vdashatm c_i : \rho_i'}}{\Gamma \vdashatm \{ \ttreturn\;x = c, \overline{\itop_i\;x_i\;k_i = c_i}\}} \right\rrbracket \\
  \hspace{3.5cm} = \left\{ \overline{\itop_i = \lambda x_i.\lambda k_i.\sembrack{\Gamma, x_i:\tau_i, k_i : \tau_i'\rightarrow\rho_i \vdashatm c_i : \rho_i'}} \right\}
  \end{array}
\]
Note that the transformed result is a record mapping each operation name $\itop_i$ to a function obtained by transforming the typing derivation for the operation clause body $c_i$.  The rule is used in the CPS transformation corresponding to the typing rule (\textsc{T-Han}) for typing a with-handle block shown below.
\[
\begin{array}{c}
  \sembrackstretch{\dfrac{\Gamma \vdashatm h \ \ \Gamma \vdashatm c : \tau/\rho\Rightarrow\rho' \ \ \Gamma,x:\tau \vdashatm c' : \rho \ \ \ttreturn\;x = c' \in h}{\Gamma \vdashatm \ttwithhandle{h}{c} : \rho'}}\\
  \hspace{2.5cm} = \sembrackstretch{\Gamma \vdashatm c : \tau/\rho\Rightarrow\rho'}@\sembrackstretch{\Gamma \vdashatm h}@\lambda x.\sembrackstretch{\Gamma,x:\tau \vdashatm c' : \rho}
  \end{array}
\]
The rule uses the transformation rule corresponding to (\textsc{T-Hdlr}) mentioned above to transform the handler typing derivation $\Gamma \vdashatm h$ to a record of operations, transform the return clause typing derivation $\Gamma,x:\tau \vdashatm c' : \rho$, and apply the transformation of the typing derivation for the with-handle block body $\Gamma \vdashatm c : \tau/\rho\Rightarrow\rho'$, which would be a function of the type $\sembrack{\Sigma}\rightarrow(\sembrack{\tau}\rightarrow\sembrack{\rho})\rightarrow\sembrack{\rho'}$, to the former and the $\lambda$ abstraction of the latter.  The passed record will be looked up in the transformation of an operation invocation, as seen in the following transformation rule corresponding to (\textsc{T-Op}).
\[
\begin{array}{c}
  \sembrackstretch{\dfrac{\Sigma(\itop) = \tau\rightarrow\tau'/\rho_1\Rightarrow\rho_2 \ \ \Gamma \vdashatm v : \tau}{\Gamma \vdashatm \itop\;v : \tau'/\rho_1\Rightarrow\rho_2}} = \lambda h.\lambda k. h.\itop\;\sembrack{\Gamma \vdashatm v : \tau}\;k
\end{array}
\]
We refer to \full{Appendix~\ref{app:cpstyping}}{the extended report~\cite{fullversion}} for the complete set of transformation rules that concern typing derivations.

We show the correctness of our CPS transformation.  First, we show that the transformed program is a $\vdashst$-typable $\plnamenoeff$ program, which follows immediately from the following theorem that can be proven by induction on (sub)typing derivations, and the fact that the right hands of the transformation rules do not contain effect handlers.
\begin{theorem}[Typability Preservation]
\label{thm:typabilitypreservation}
The following holds.
\begin{enumerate}
  \item If $\tau \leq \tau'$ then $\vdashst \sembrack{\tau \leq \tau'} : \sembrack{\tau}\rightarrow\sembrack{\tau'}$.
  \item If $\rho \leq \rho'$ then $\vdashst \sembrack{\rho \leq \rho'} : \sembrack{\rho}\rightarrow\sembrack{\rho'}$.
  \item If $\Gamma \vdashatm v : \tau$ then $\sembrack{\Gamma} \vdashst \sembrack{\Gamma \vdashatm v : \tau} : \sembrack{\tau}$.
  \item If $\Gamma \vdashatm c : \rho$ then $\sembrack{\Gamma} \vdashst \sembrack{\Gamma \vdashatm c : \rho} : \sembrack{\rho}$.
\end{enumerate}
\end{theorem}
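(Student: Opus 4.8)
The four statements are mutually intertwined—value typing derivations may nest subtyping derivations via (\textsc{T-VSub}), computation typing derivations nest value, handler, and subtyping derivations via (\textsc{T-Ret}), (\textsc{T-Op}), (\textsc{T-Hdlr}), (\textsc{T-Han}), (\textsc{T-CSub}), and subtyping of computation types nests subtyping of value types via (\textsc{S-Embed})—so the plan is to prove all of them \emph{simultaneously by induction on the structure of the (sub)typing derivation}, generalized over arbitrary $\Gamma$. I would also carry along an auxiliary fifth clause for the handler judgment: if $\Gamma \vdashatm h$ with $h = \{\ttreturn\;x=c,\overline{\itop_i\,x_i\,k_i=c_i}\}$ and $\Sigma(\itop_i)=\tau_i\rightarrow\tau_i'/\rho_i\Rightarrow\rho_i'$, then $\sembrack{\Gamma} \vdashst \sembrack{\Gamma\vdashatm h} : \{\overline{\itop_i : \sembrack{\tau_i}\rightarrow(\sembrack{\tau_i'}\rightarrow\sembrack{\rho_i})\rightarrow\sembrack{\rho_i'}}\}$, i.e. the record-of-operations type from Figure~\ref{fig:cpstypes}; this is needed to make the (\textsc{T-Han}) case go through. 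Before the induction I would establish one routine preliminary lemma: \emph{static application preserves typing}, namely if $\Gamma\vdashst M : \sigma\rightarrow\sigma'$ and $\Gamma\vdashst N:\sigma$ then $\Gamma\vdashst M@N:\sigma'$, and likewise that the transformation's record-projection shorthand is well typed. This holds because $@$ is a meta-level $\beta$-step performed at transformation time, so it is justified by the substitution lemma (and weakening) for $\vdashst$-typed $\plnamenoeff$ with records.

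\textbf{Subtyping cases.} The interesting rule is (\textsc{S-Embed}). Unfolding, $\sembrack{\tau_1/\pureeff \leq \tau_2/\rho_1\Rightarrow\rho_2} = \lambda x.\lambda h.\lambda k.\,\sembrack{\rho_1 \leq \rho_2}@\bigl(k@(\sembrack{\tau_1 \leq \tau_2}@x)\bigr)$, whose target type is $\sembrack{\tau_1}\rightarrow\bigl(\sembrack{\Sigma}\rightarrow(\sembrack{\tau_2}\rightarrow\sembrack{\rho_1})\rightarrow\sembrack{\rho_2}\bigr)$. Taking $x:\sembrack{\tau_1}$, $h:\sembrack{\Sigma}$ (unused), $k:\sembrack{\tau_2}\rightarrow\sembrack{\rho_1}$, the induction hypothesis gives $\sembrack{\tau_1\leq\tau_2}@x : \sembrack{\tau_2}$ (clause 1) and then $\sembrack{\rho_1\leq\rho_2}@(k@(\sembrack{\tau_1\leq\tau_2}@x)) : \sembrack{\rho_2}$ (clause 2), so the $\lambda$-abstraction has the desired type. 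The remaining subtyping rules (reflexivity, the contravariant/covariant congruence for $\rightarrow$, and the congruence for $\tau/\rho_1\Rightarrow\rho_2$) are handled by unfolding their transformations and threading the induction hypothesis through the $\lambda$-abstractions and static applications in the same style.

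\textbf{Typing cases.} For values: (\textsc{St-Var})-style base cases are immediate from $\sembrack{\Gamma}$; $\lambda x.c$ and $\ttrec{x}{v}$ use the induction hypothesis on the body with the extended environment (using $\sembrack{\Gamma,x:\tau} = \sembrack{\Gamma},x:\sembrack{\tau}$ and $\sembrack{\tau\rightarrow\rho}=\sembrack{\tau}\rightarrow\sembrack{\rho}$); (\textsc{T-VSub}) applies clause 1 and the static-application lemma. For computations: (\textsc{T-Ret}) is trivial since $\sembrack{\ttreturn\;v}=\sembrack{\Gamma\vdashatm v:\tau}$ and $\sembrack{\tau/\pureeff}=\sembrack{\tau}$, so clause 3 suffices; (\textsc{T-CSub}) is clause 2 plus static application. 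For (\textsc{T-Op}), the transform $\lambda h.\lambda k.\,h.\itop\;\sembrack{\Gamma\vdashatm v:\tau}\;k$ is checked by projecting $\itop$ from the signature record $h:\sembrack{\Sigma}$ at type $\sembrack{\tau}\rightarrow(\sembrack{\tau'}\rightarrow\sembrack{\rho_1})\rightarrow\sembrack{\rho_2}$, applying it to $\sembrack{\Gamma\vdashatm v:\tau}:\sembrack{\tau}$ (clause 3) and $k:\sembrack{\tau'}\rightarrow\sembrack{\rho_1}$, obtaining $\sembrack{\rho_2}$, so the whole term has type $\sembrack{\Sigma}\rightarrow(\sembrack{\tau'}\rightarrow\sembrack{\rho_1})\rightarrow\sembrack{\rho_2}=\sembrack{\tau'/\rho_1\Rightarrow\rho_2}$. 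The let-rules (\textsc{T-LetP}), (\textsc{T-LetIp}) unfold to nested static applications whose typing follows from the induction hypotheses on the two sub-derivations and the backward composition of answer types matching the type transformation. (\textsc{T-Hdlr}) builds a record each of whose fields is the transform of an operation clause body, typed by the induction hypothesis (with $x_i:\sembrack{\tau_i}$, $k_i:\sembrack{\tau_i'}\rightarrow\sembrack{\rho_i}$) at $\sembrack{\rho_i'}$, matching the auxiliary clause. Finally (\textsc{T-Han}) is $\sembrack{\Gamma\vdashatm c:\tau/\rho\Rightarrow\rho'}@\sembrack{\Gamma\vdashatm h}@\lambda x.\sembrack{\Gamma,x:\tau\vdashatm c':\rho}$: clause 4 gives the first factor type $\sembrack{\Sigma}\rightarrow(\sembrack{\tau}\rightarrow\sembrack{\rho})\rightarrow\sembrack{\rho'}$, the auxiliary clause gives the second factor the signature record type, and clause 4 on the return clause gives $\lambda x.(\dots):\sembrack{\tau}\rightarrow\sembrack{\rho}$, so two static applications yield $\sembrack{\rho'}$.

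\textbf{Main obstacle.} The routine risk is the sheer bookkeeping of static applications, but the genuinely delicate point is the (\textsc{T-Han})/(\textsc{T-Hdlr}) interface: making the record type produced for $\sembrack{\Gamma\vdashatm h}$ coincide exactly with the signature record type $\sembrack{\Sigma}$ that the transformed with-handle body expects as its first argument. This forces one to be precise about what $\Sigma$ denotes in the type transformation of an effectful computation and about handler coverage (e.g. that a well-typed handler handles precisely the operations recorded in $\Sigma$, or that the formalism threads the operative sub-signature consistently), and it is the one place where the argument is not a mechanical unfolding. Everything else—including the static-application lemma, which itself reduces to the substitution lemma for $\plnamenoeff$—is standard.
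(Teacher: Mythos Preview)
Your proposal is correct and follows exactly the paper's approach: the paper's own proof is the single sentence ``by induction on (sub)typing derivations,'' and you have supplied a faithful expansion of that induction, including the auxiliary handler clause and the static-application lemma that the paper leaves implicit. The handler/signature coverage issue you flag as the main obstacle is real but is an artifact of the paper's presentation (a global fixed $\Sigma$ that handlers are tacitly assumed to cover) rather than a gap in your argument.
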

Next, we show that the transformation preserves reachability.  That is, if $c$ is $\vdashatm$-typable by a derivation $\vdashatm c : \tau/\pureeff$, then the answer to the reachability problem for $c$ is the same as that for $\sembrack{\;\vdashatm c: \tau/\pureeff }$.
This is shown by the following simulation theorem.  Let $\rightarrow^+$ denote the transitive closure of $\rightarrow$.
\begin{theorem}[Simulation]
\label{thm:simulation}
  Suppose $\vdashatm c : \tau/\pureeff$.  The following holds.
  \begin{enumerate}
  \item \label{item:forward} If $c \rightarrow^* \ttreturn\;v$ then $\vdashatm v : \tau$ and $\sembrack{\;\vdashatm c : \tau/\pureeff} \rightarrow^+ \sembrack{\;\vdashatm v : \tau}$.
  \item \label{item:backward} If $\sembrack{\;\vdashatm c : \tau/\pureeff} \rightarrow^+ \ttreturn\;v'$ then there is $v$ such that $\vdashatm v : \tau$, $\sembrack{\;\vdashatm v : \tau} = v'$, and $c \rightarrow^* \ttreturn\;v$.
    \end{enumerate}
\end{theorem}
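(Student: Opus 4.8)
The plan is to obtain both items from a \emph{single-step simulation} lemma for the transformation, iterated along reduction sequences, together with the standard meta-theory of $\vdashatm$. Because the CPS transformation is directed by typing derivations rather than by raw terms, the first ingredient I would establish is subject reduction for $\vdashatm$ --- if $\Gamma \vdashatm c : \rho$ and $c \rightarrow c'$ then $\Gamma \vdashatm c' : \rho$ --- together with a progress lemma strong enough to guarantee that a closed computation of a \emph{pure} type is either of the form $\ttreturn\;v$ or can take a step; in particular it never gets stuck at an unhandled operation, which is exactly what restricting $\vdashatm$-typable programs to pure types buys (cf.\ the footnote on $\vdashatm$-typability). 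Together these give that a $\vdashatm$-typable $c : \tau/\pureeff$ either diverges or reduces to some $\ttreturn\;v$ with $\vdashatm v : \tau$, which already supplies the ``$\vdashatm v : \tau$'' parts of both items and the existence of $v$ in~\ref{item:backward}.

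The technical core is a compositionality lemma stating that the transformation commutes with value substitution: if $\mathcal{D}$ derives $\Gamma, x : \tau \vdashatm c : \rho$ and $\mathcal{D}_v$ derives $\Gamma \vdashatm v : \tau$, then there is a derivation $\mathcal{D}'$ of $\Gamma \vdashatm c[v/x] : \rho$ with $\sembrack{\mathcal{D}'} = \sembrack{\mathcal{D}}[\sembrack{\mathcal{D}_v}/x]$, and symmetrically for value subderivations; this is proven by induction on $\mathcal{D}$, using Theorem~\ref{thm:typabilitypreservation} to know that value derivations transform to values. I would also prove a structural fact about subsumption: the transform of a derivation ending in (\textsc{T-CSub}) or (\textsc{T-VSub}) is the transform of the corresponding subtyping derivation statically applied (via $@$) to the transform of the immediate subderivation, and, once the static redexes of such a coercion are contracted, it is a one-hole context that places its argument in a reduction-active position without duplicating it. Consequently I may work, without loss of generality, with derivations in which subsumptions sit at the places the transformation inspects, and a reduction step inside a subderivation's transform always lifts to a step of the whole transform.

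With these in hand I would prove the single-step lemma: if $\mathcal{D}$ derives $\Gamma \vdashatm c : \rho$ and $c \rightarrow c'$, there is a derivation $\mathcal{D}'$ of $\Gamma \vdashatm c' : \rho$ such that $\sembrack{\mathcal{D}}$ reaches $\sembrack{\mathcal{D}'}$ in at least one step --- with the invariant stated so as to account for the fact that an effectful computation transforms to a function expecting a handler record and a continuation, so that the relevant steps occur once those arguments are supplied (which in a closed pure program always happens statically through $@$). The proof is by induction on the derivation of $c \rightarrow c'$. The congruence cases (\textsc{E-Let}) and (\textsc{E-Han}) follow from the induction hypothesis plus the observation that the transforms of $\ttlet{x}{c_1}{c_2}$ and $\ttwithhandle{h}{c}$ evaluate their first subcomputation in an active position; the $\beta$-like cases (\textsc{E-Ret}), (\textsc{E-LamApp}), (\textsc{E-RecApp}), (\textsc{E-HRet}) and the conditionals follow from the compositionality lemma and the shape of the relevant transform rules. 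The delicate case is (\textsc{E-Op}): I would first show, by induction on the let-stack $E$, that after the handler record and the return-clause continuation are supplied, the transform of $\ttwithhandle{h}{E[\itop\;v]}$ reduces to $\sembrack{h}.\itop\;\sembrack{v}\;K_E$ with the captured continuation $K_E$ \emph{syntactically equal} to the transform of $\lambda y.\ttwithhandle{h}{E[\ttreturn\;y]}$; then, since the (\textsc{T-Hdlr}) transform makes $\sembrack{h}.\itop$ the function $\lambda x_0.\lambda k_0.\sembrack{c_0}$, two $\beta$-steps and the compositionality lemma produce exactly the transform of $c_0[v/x_0, \lambda y.\ttwithhandle{h}{E[\ttreturn\;y]}/k_0]$, i.e.\ of the (\textsc{E-Op}) reduct. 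Iterating the single-step lemma yields~\ref{item:forward}, the base case $c = \ttreturn\;v$ being immediate from the definition of the transform on a (\textsc{T-Ret}) derivation. For~\ref{item:backward} I would use that $\plnamenoeff$ reduction is deterministic and $\ttreturn\;v'$ is a normal form: if $c$ diverged then, iterating the single-step lemma, $\sembrack{c}$ would admit an infinite reduction, which is impossible once $\sembrack{c} \rightarrow^+ \ttreturn\;v'$; hence $c \rightarrow^* \ttreturn\;v$ for some $v$ with $\vdashatm v : \tau$, and~\ref{item:forward} together with determinacy forces $\sembrack{\;\vdashatm v : \tau} = v'$.

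I expect the main obstacle to be precisely the derivation-directedness of the transformation: reduction does not act on derivations, so for each reduction rule one must \emph{construct} a derivation of the reduct and argue that its transform is the intended target term, which is subtle where the redex's derivation interleaves (\textsc{T-CSub})/(\textsc{T-VSub}) with the principal rule, and where a single source step must yield \emph{at least one} target step --- this last point hinging on the administrative redexes of the CPS plumbing being contracted statically by $@$, so that every dynamic redex of the target descends from either a source redex or an unfolding of a recursive definition. Within this, the (\textsc{E-Op}) case is the most laborious: one must track the handler record threaded through the transformed evaluation context and verify that the delimited continuation produced by the operational semantics --- including the deep-handler re-wrapping $\ttwithhandle{h}{E[\ttreturn\;y]}$ --- is reproduced on the nose as its CPS image, which is exactly where (\textsc{E-Op}) and the (\textsc{T-Han})/(\textsc{T-Hdlr}) transforms must line up.
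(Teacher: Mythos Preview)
Your proposal is correct and follows essentially the same route as the paper's proof: establish progress and preservation for $\vdashatm$, prove that the CPS transformation commutes with value substitution, then prove a one-step forward simulation lemma (split into a pure case and an effectful case where the simulation is stated after supplying arbitrary handler-record and continuation arguments $v_h,v_k$) by induction on the reduction derivation with a preliminary case analysis on whether the typing derivation ends in a subsumption rule; item~\ref{item:forward} follows by iteration, and item~\ref{item:backward} is obtained from type soundness, determinism of $\rightarrow$, and the forward direction exactly as you describe. Your treatment of (\textsc{E-Op}) via an induction on the let-stack $E$ is more explicit than the paper's appendix, which handles the derivation construction for (\textsc{E-Op}) in the preservation lemma and then defers the corresponding simulation case to ``similarly shown by applying'' the substitution-commutation lemma, but the underlying argument is the same.
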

We refer to \full{Appendix~\ref{app:simulation}}{the extended report~\cite{fullversion}} for the proof.
Therefore, given an $\vdashatm$-typable $c$, we can decide the reachability of $c$ by deciding the reachability of $\sembrack{\;\vdashatm c : \tau/\pureeff}$, because $\sembrack{\;\vdashatm c : \tau/\pureeff}$ is a $\vdashst$-typable $\plnamenoeff$ program by Theorem~\ref{thm:typabilitypreservation} and the reachability problem of $\vdashst$-typable $\plnamenoeff$ is decidable.  Thus, reachability for $\vdashatm$-typable $\plname$ is decidable.  This completes the proof of Theorem~\ref{thm:main}.

\subsection{A Concrete Example of a $\vdashst$-Typable but $\vdashatm$-Untypable Program}
\label{subsec:atmuntypableex}

A corollary of our decidability result (Theorem~\ref{thm:main}) is that there are $\vdashst$-typable programs that are not $\vdashatm$-typable, which may seem counter-intuitive because one may naturally think that allowing ATM can only make more programs typable. We give a concrete example of such a program.  Let 
\[
\begin{array}{rcl}
  c_{\it ex4}  & \defeq  & \ttrec{f}{\lambda z.\ttwithhandle{h}{f\;()}}\\
  h  & \defeq & \{ \ttreturn\;x = \itop{}\;(), \itop\;x\;k = () \}
\end{array}
\]
This program $c_{\it ex4}$ is $\vdashst$-typable.  Namely, $\vdashst c_{\it ex4} : \unitty\rightarrow\unitty$ with $\Sigma(\itop) = \unitty\rightarrow\unitty$.  The recursively defined function $f$ would also be given the type $\unitty\rightarrow\unitty$.
We now show that $c_{\it ex4}$ is $\vdashatm$-untypable.  Suppose for contradiction that it is $\vdashatm$-typable.  It must be the case that the recursive function $f$ is given the type $\unitty\rightarrow\rho_f$ for some $\rho_f$.  Because the body of the with-handle block is $f\;()$, by (\textsc{T-Han}), this $\rho_f$ must satisfy $\rho_f \leq \tau/\rho_1\Rightarrow\rho_2$ for some $\tau$, $\rho_1$, and $\rho_2$.  However, by (\textsc{T-Han}) again, $\rho_2$ must be a subtype of the with-handle block and thus a subtype of $\rho_f$ by (\textsc{T-Rec}).  That is, we have the subtyping relation $\rho_2 \leq \tau/\rho_1 \Rightarrow \rho_2$.
We state the following general property of ATM subtyping, which will be used in our argument.
\begin{lemma}
If $\rho \leq \tau/\rho'\Rightarrow\rho$ then $\rho'$ is pure.
\end{lemma}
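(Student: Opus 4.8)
The plan is to prove the lemma by well-founded induction on the size $|\rho|$ of $\rho$, using a couple of inversion facts about the ATM subtyping relation that follow directly from the fact (stated in the excerpt) that, apart from (\textsc{S-Embed}), the subtyping rules are structural congruence rules. Concretely, the rules I would invert against are: the congruence rule $\tau_1/\pureeff \leq \tau_2/\pureeff$ (from $\tau_1 \leq \tau_2$); the rule (\textsc{S-Embed}) for $\tau_1/\pureeff \leq \tau_2/\rho_1\Rightarrow\rho_2$ (from $\tau_1 \leq \tau_2$ and $\rho_1 \leq \rho_2$); and the congruence rule for effectful computation types $\tau_1/\rho_1\Rightarrow\rho_1' \leq \tau_2/\rho_2\Rightarrow\rho_2'$, whose premises give in particular $\rho_2 \leq \rho_1$ (contravariance in the initial answer type) and $\rho_1' \leq \rho_2'$ (covariance in the final answer type).

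The one auxiliary observation I would record first is: \emph{any subtype of a pure computation type is pure}, i.e.\ if $\sigma \leq \tau''/\pureeff$ then $\sigma$ has the form $\tau'''/\pureeff$. This is immediate from inspecting the rules, since the only rule whose conclusion has a pure computation type on the right is the $\pureeff$-congruence rule, and its premise forces the left-hand side to be pure; in particular an effectful computation type is never a subtype of a pure one, and (\textsc{S-Embed}) embeds only in the opposite direction.

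For the main induction: if $\rho$ is pure, say $\rho = \tau_0/\pureeff$, then the hypothesis $\tau_0/\pureeff \leq \tau/\rho'\Rightarrow\rho$ is a ``pure $\leq$ effectful'' judgment, hence an instance of (\textsc{S-Embed}), whose premises give $\rho' \leq \rho = \tau_0/\pureeff$; by the auxiliary observation $\rho'$ is pure. If $\rho$ is effectful, say $\rho = \tau_0/\rho_1\Rightarrow\rho_2$, then inverting the effectful congruence rule on $\tau_0/\rho_1\Rightarrow\rho_2 \leq \tau/\rho'\Rightarrow\rho$ yields $\rho' \leq \rho_1$ and $\rho_2 \leq \rho$. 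I would then analyze $\rho_2 \leq \rho = \tau_0/\rho_1\Rightarrow\rho_2$ by cases on $\rho_2$: if $\rho_2$ is pure, (\textsc{S-Embed}) gives $\rho_1 \leq \rho_2$, so $\rho_1$ is pure (auxiliary observation), hence $\rho'$ is pure (auxiliary observation again, using $\rho' \leq \rho_1$); if $\rho_2 = \tau_2/\sigma_1\Rightarrow\sigma_2$ is effectful, the effectful congruence rule on $\rho_2 \leq \rho$ gives $\rho_1 \leq \sigma_1$ and $\sigma_2 \leq \tau_2/\sigma_1\Rightarrow\sigma_2$, and since $\sigma_2$ is a proper subterm of $\rho_2$ and hence of $\rho$, the induction hypothesis applied to $\sigma_2 \leq \tau_2/\sigma_1\Rightarrow\sigma_2$ shows $\sigma_1$ is pure, whence $\rho_1$ is pure and therefore $\rho'$ is pure.

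The bulk of the work is the routine inversion bookkeeping; there is no deep obstacle. The only point that needs care is well-foundedness of the recursion: in the sole recursive sub-case the new self-referential computation type $\sigma_2$ sits two $\Rightarrow$-constructors below $\rho$, so $|\sigma_2| < |\rho|$ strictly decreases and the recursion terminates (it bottoms out exactly when the iterated ``final answer type'' first becomes pure). A minor caveat is that this argument presumes the subtyping rules are syntax-directed with no explicit transitivity rule, as the excerpt indicates; were transitivity present, one would first establish that it is admissible before the inversions apply.
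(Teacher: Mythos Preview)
Your proof is correct and follows essentially the same approach as the paper, which proves the lemma simply ``by induction on the structure of $\rho$'' without spelling out the details. Your write-up is a faithful unfolding of that induction, correctly inverting (\textsc{S-Embed}) and the effectful congruence rule (\textsc{S-Ipure}) and using the auxiliary fact that subtypes of pure computation types are pure; the well-foundedness check via the strict subterm $\sigma_2$ is exactly what makes the paper's structural induction go through.
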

\begin{proof}
By induction on the structure of $\rho$.\qed
\end{proof}
Therefore, $\rho_1$ must be pure.  This is not possible because $\rho_1$ must be a supertype of the type of the return clause, but the return clause invokes the operation $\itop$ and therefore must be given an effectful type.  Thus, $c_{\it ex4}$ is not $\vdashatm$-typable.

\subsection{Termination and Non-Termination for $\vdashatm$-Typable and $\vdashst$-Typable Programs}
\label{subsec:termination}

The example of a $\vdashst$-typable but $\vdashatm$-untypable program given in Section~\ref{subsec:atmuntypableex} used a recursive definition in an essential way.  A natural question is whether this is always the case, that is, recursive definitions are necessary for a program to be $\vdashst$-typable but not $\vdashatm$-typable.  We answer the question negatively by presenting a recursive-definition-free program that is $\vdashst$-typable but not $\vdashatm$-typable.
We do this by presenting a result that may be of independent interest and says that any $\vdashatm$-typable recursive-definition-free program terminates whereas the same is not true for the $\vdashst$-typable ones.

Consider the following recursive-definition-free program $c_{\it ex5}$:
\[
(\mathtt{with}\;h_{\it state}\;\mathtt{handle}\;\mathtt{let}\;f = \lambda x.\mathit{get}\;()\;()\;\mathtt{in}\;(\mathit{set}\;f;f\;()))\;\lambda x.()
\]
where $h_{\it state}$ is the mutable state handler from Example~\ref{ex:loop}.
This program essentially implements the textbook encoding of an infinite loop by a mutable state (i.e., Landin's knot), and is non-terminating.  Indeed, we have
\[
\begin{array}{rcl}
  c_{\it ex5} & \rightarrow^+ & (\lambda s.(\lambda y.\ttwithhandle{h_{\it state}}{y;v_f\;()})\;()\;v_f)\;\lambda x.() \\
  & \rightarrow^+ & (\lambda y.\ttwithhandle{h_{\it state}}{y;v_f\;()})\;()\;v_f \\
  & \rightarrow^+ & (\ttwithhandle{h_{\it state}}{v_f\;()})\;v_f \\
  & \rightarrow^+ & (\ttwithhandle{h_{\it state}}{\mathit{get}\;()\;()})\;v_f\\
  & \rightarrow^+ & (\lambda s.(\lambda x.\ttwithhandle{h_{\it state}}{x\;()})\;s\;s)\;v_f\\
  & \rightarrow^+ &(\lambda x.\ttwithhandle{h_{\it state}}{x\;()})\;v_f\;v_f \\
  & \rightarrow^+ & (\ttwithhandle{h_{\it state}}{v_f\;()})\;v_f
\end{array}
\]
where $v_f = \lambda\_.\mathit{get}\:()\:()$.  Because the last line is the same as the third line, the evaluation diverges.
The program $c_{\it ex5}$ is $\vdashst$-typable.  Namely, it can be $\vdashst$-typed by giving $\mathit{get}$ the type $\unitty\rightarrow\sigma_t$ and $\mathit{set}$ the type $\sigma_t\rightarrow\unitty$ in the operation signature, where $\sigma_t = \unitty\rightarrow\unitty$.  The typing of the return, $\mathit{get}$, and $\mathit{set}$ clauses can be done in the same way as in Example~\ref{ex:sttypings} except for using the above $\sigma_t$ for the $\sigma_t$ there.
The with-handle block can be given the type $\unitty$ with these types of
$\mathit{get}$ and $\mathit{set}$.
However, $c_{\it ex5}$ is not $\vdashatm$-typable.  This follows from the theorem below which can be easily shown by using our CPS transformation.
\begin{theorem}
Every $\vdashatm$-typable $\plname$ program without recursive definitions terminates.
\end{theorem}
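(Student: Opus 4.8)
The plan is to derive termination of the source program from strong normalization of its CPS image. Let $c$ be a recursive-definition-free $\plname$ program with $\vdashatm c : \tau/\pureeff$, and let $t$ denote its CPS transform $\sembrack{\;\vdashatm c : \tau/\pureeff}$. By Theorem~\ref{thm:typabilitypreservation}, $t$ is a closed $\vdashst$-typable $\plnamenoeff$ program (in the record-extended target language). The first key observation is that $t$ contains no recursive definitions: scanning the CPS transformation rules on typing and subtyping derivations, every right-hand side is built only from variables, $\lambda$-abstractions, applications, records, projections, conditionals, $\mathtt{let}$, $\ttreturn$, and the administrative static applications $@$ that are discharged during transformation --- the sole exception being the rule associated with (\textsc{T-Rec}), which emits a $\mathtt{rec}$ constructor. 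Since the typing derivation of $c$ uses no instance of (\textsc{T-Rec}), no $\mathtt{rec}$ occurs in $t$.

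Second, I would invoke strong normalization for recursive-definition-free $\vdashst$-typable $\plnamenoeff$: this fragment is an instance of the simply-typed $\lambda$-calculus with unit, Booleans (with $\mathtt{if}$), and record types, whose reduction is strongly normalizing by the standard reducibility argument. Hence no infinite reduction sequence starts from $t$.

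Third and finally, I would transport infinite reductions backward through the CPS transformation. Suppose toward a contradiction that $c$ admits an infinite reduction $c = c_0 \rightarrow c_1 \rightarrow c_2 \rightarrow \cdots$. Subject reduction for $\vdashatm$ keeps each $c_i$ typable at $\tau/\pureeff$, so each $\sembrack{\;\vdashatm c_i : \tau/\pureeff}$ is defined, and the per-step forward simulation underlying the proof of Theorem~\ref{thm:simulation}(\ref{item:forward}) yields $\sembrack{\;\vdashatm c_i : \tau/\pureeff} \rightarrow^+ \sembrack{\;\vdashatm c_{i+1} : \tau/\pureeff}$ for every $i$. Concatenating these gives an infinite reduction sequence from $t$, contradicting its strong normalization. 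Therefore $c$ has no infinite reduction, i.e., $c$ terminates. (Combined with progress for $\vdashatm$-typable programs one moreover gets $c \rightarrow^* \ttreturn\;v$ for some $v$; alternatively, one can avoid the per-step forward simulation entirely, using strong normalization plus progress for the target to conclude $t \rightarrow^* \ttreturn\;v'$ and then Theorem~\ref{thm:simulation}(\ref{item:backward}) to pull this back to $c$.)

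The only genuinely delicate points are the bookkeeping in the first and third steps: confirming by a complete case analysis of the transformation rules that $\mathtt{rec}$ is introduced exclusively by the (\textsc{T-Rec}) rule, and extracting the per-step forward simulation from the simulation argument (Theorem~\ref{thm:simulation} is stated end-to-end, but the induction establishing it proceeds essentially by this per-step statement). Everything else --- subject reduction and progress for $\vdashst$-typed $\plnamenoeff$, and strong normalization of the simply-typed target --- is standard.
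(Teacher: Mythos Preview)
Your proposal is correct and follows essentially the same route as the paper: CPS-transform the program, observe that the transformation introduces $\mathtt{rec}$ only via the (\textsc{T-Rec}) rule so the image is recursion-free, invoke termination of the simply-typed target, and pull termination back to the source via the simulation theorem. Your write-up is in fact more explicit than the paper's on two points the paper glosses over---the case analysis showing only (\textsc{T-Rec}) emits $\mathtt{rec}$, and the use of the per-step forward simulation (the appendix Lemma~\ref{lem:onestepforward}) to transport an infinite source reduction to an infinite target reduction---so nothing is missing.
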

\begin{proof}
  Let $c$ be an $\vdashatm$-typable $\plname$ program without recursive definitions, and let $c' = \sembrack{\;\vdashatm c : \tau/\pureeff}$.  Because our CPS transformation does not add new recursive definitions, $c'$ also does not contain recursive definitions (in addition to not containing effect handlers).  By the results shown in Section~\ref{subsec:cps}, $c'$ is $\vdashst$-typable and $c'$ terminates iff $c$ terminates.  Because simply-typed $\lambda$ calculus (without recursive definitions or effect handlers) is terminating, $c'$ must terminate and therefore so must $c$. \qed
\end{proof}

\subsection{Number of Active Effect Handlers Does Not Capture Decidability}
\label{subsec:errorinsekiyamaunno}

A recent paper~\cite{DBLP:journals/pacmpl/Sekiyama024} introduced a notion called \emph{active effect handlers}.  Their paper claims that the boundedness of the number of active effect handlers characterizes the decidability of reachability, and that the reason why ATM ensures the decidability is because it ensures that this number is bounded.
In this section, we disprove this claim by presenting a class of programs with only a bounded number of active effect handlers (in fact, with only at most one active effect handler) but whose reachability is nonetheless undecidable.

Let $\mathit{succ}$ be an operation, and $x_0$, $x_1$, $f_0,\dots,f_n$ be variables where $n \geq 0$.  Let us define the set of computation expressions $\mathit{Inst}^n \defeq \{ c_{\it inc}^{i,j} \mid i \in \{0,1\}, j \in \{0,\dots,n\} \} \cup \{ c_{\it dec}^{i,j,m} \mid i \in \{0,1\}, j,m \in \{0,\dots,n\} \cup \{()\}$ where
\[
\begin{array}{rcl}
  c_{\it inc}^{i,j} & \defeq & \ttlet{x_i}{\lambda y.\mathit{succ}\;x_i}{f_j\;x_0\;x_1}\\
  c_{\it dec}^{i,j,m} & \defeq & \ttwithhandle{\{\ttreturn\;x = f_j\;x_0\;x_1, \mathit{succ}\;x_i\;k = f_m\;x_0\;x_1 \}}{x_i\;()}
\end{array}
\]
Then, let $\mathit{MM}^n$ be the class of programs of the form
\[
  \mathtt{mrec}\;f_0 = \lambda x_0.\lambda x_1. c_0\;\mathtt{and}\; \dots
  \;\mathtt{and}\;f_n = \lambda x_0.\lambda x_1. c_n\;\mathtt{in}
  \;(f_0\;(\lambda x.())\;\lambda x.());\tttrue
\]
where each $c_i \in \mathit{Inst}^n$, and the mutual recursive definition $\mathtt{mrec}\;f_0=v_0\;\mathtt{and}\;\dots\;\\\mathtt{and}\;f_n=v_n\;\mathtt{in}\;c$ is syntactic sugar defined inductively by:
\[
\begin{array}{l}
  \mathtt{let}\;f_0 = \ttrec{f_0}{(\mathtt{mrec}\;f_1=v_1\;\mathtt{and}\;\dots\;\mathtt{and}\;f_n=v_n\;\mathtt{in}\;v_0)}\;\mathtt{in} \\
  \hspace{5cm}\vdots\\
  \mathtt{let}\;f_n = \ttrec{f_n}{(\mathtt{mrec}\;f_0=v_0\;\mathtt{and}\;\dots\;\mathtt{and}\;f_{n-1}=v_{n-1}\;\mathtt{in}\;v_n)}\;\mathtt{in}\;c
\end{array}
\]
Let $\mathit{MM} = \bigcup_{n \in \mathbb{N}} \mathit{MM}^n$.  We refer to \full{Appendix~\ref{app:minskymachine}}{the extended report~\cite{fullversion}} for the proof of the following theorem.
\begin{theorem}
\label{thm:minskymachine}
Reachability for $\mathit{MM}$ is undecidable.
\end{theorem}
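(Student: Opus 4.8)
The plan is to reduce the halting problem for two-counter (Minsky) machines to reachability for $\mathit{MM}$. Recall that such a machine has finitely many instructions $0,\dots,n$ and two counters, each instruction being either ``increment counter $i$ and jump to instruction $j$'' or ``if counter $i$ is zero jump to instruction $j$, otherwise decrement counter $i$ and jump to instruction $m$'', and that it is undecidable whether a given such machine halts when started at instruction $0$ with both counters set to $0$. The claim is that $\mathit{MM}^n$ is exactly an encoding of these machines: $f_j$ plays the role of instruction $j$, its two parameters $x_0,x_1$ encode the two counters, $c_{\it inc}^{i,j}$ encodes an increment instruction, $c_{\it dec}^{i,j,m}$ encodes a conditional-decrement instruction, and an index $()$ in place of $j$ or $m$ encodes a jump to a halting state (with $f_{()}$ a marker that simply returns $()$). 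So, given a Minsky machine $M$, I would effectively construct the program $c_M\in\mathit{MM}^n$ obtained by choosing each $c_i\in\mathit{Inst}^n$ according to $M$'s $i$-th instruction, and then establish that $c_M \rightarrow^* \ttreturn\;\tttrue$ if and only if $M$ halts.

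The core is a simulation lemma. Encode the counter value $k$ by the closed value $N_k$ with $N_0 \defeq \lambda x.()$ and $N_{k+1} \defeq \lambda y.\mathit{succ}\;N_k$, and encode the configuration of $M$ with current instruction $j$ and counters $(k_0,k_1)$ by the computation $f_j\;N_{k_0}\;N_{k_1}$ evaluated under the $\mathtt{mrec}$ bindings. Using a routine unfolding lemma for the $n$-ary $\mathtt{mrec}$ desugaring --- namely that each call $f_j\;w_0\;w_1$ reduces in boundedly many steps, via $(\textsc{E-RecApp})$, $(\textsc{E-LamApp})$ and $(\textsc{E-Ret})$, to $c_j[w_0/x_0,w_1/x_1]$ with all the mutual-recursion bindings restored --- I would check that each move of $M$ is mirrored by a bounded block of $\rightarrow$-steps. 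For $c_{\it inc}^{i,j}$, $(\textsc{E-Ret})$ rebinds $x_i$ to the closure $\lambda y.\mathit{succ}\;N_{k_i}$, which is exactly $N_{k_i+1}$, and control passes to $f_j$. For $c_{\it dec}^{i,j,m}$ with $k_i=0$, the application $x_i\;()$ reduces to $\ttreturn\;()$, so $(\textsc{E-HRet})$ fires and passes control to $f_j$ with both counters unchanged. For $c_{\it dec}^{i,j,m}$ with $k_i = k_i'+1$, the application $x_i\;()$ reduces directly under the handler to $\mathit{succ}\;N_{k_i'}$ (so the surrounding evaluation context is the empty one), $(\textsc{E-Op})$ fires, and --- crucially --- since the continuation parameter $k$ does not occur in the clause body $f_m\;x_0\;x_1$, the captured continuation is discarded and control passes to $f_m$ with counter $i$ replaced by $N_{k_i'}$ and the other counter unchanged. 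Note that in the two $c_{\it dec}$ cases the operation's own parameter, also named $x_i$, shadows the counter variable, which is precisely what makes $f_m\;x_0\;x_1$ already carry the decremented counter after the substitution performed by $(\textsc{E-Op})$. Since the operational semantics is deterministic on the closed terms in question, the reduction of $c_M$ is an exact trace of $M$'s run: halting of $M$ corresponds to control reaching $f_{()}$ --- hence to the main call returning $()$ and, through the trailing $;\tttrue$, to $\ttreturn\;\tttrue$ --- while divergence of $M$ corresponds to an infinite reduction that never produces $\ttreturn\;\tttrue$. This is the simulation lemma.

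The theorem is then immediate: a decision procedure for reachability of $\mathit{MM}$, applied to $c_M$, would decide whether $M$ halts, contradicting undecidability of the latter. As a side observation, the same analysis shows that every program in $\mathit{MM}$ has at most one active effect handler throughout its execution, since the handler created by a $c_{\it dec}$ block is always discarded before control enters the next instruction --- by $(\textsc{E-HRet})$ in the zero case, or, because $k$ is unused, by $(\textsc{E-Op})$ in the nonzero case; this is what makes $\mathit{MM}$ the promised counterexample to the claim discussed in Section~\ref{subsec:errorinsekiyamaunno}.

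The part I expect to require the most care is entirely bookkeeping rather than conceptual: verifying the capture-avoiding substitutions in the $c_{\it dec}$ cases, where the deliberate name clash between the operation parameter and the counter variable must be tracked precisely, and unwinding the $n$-ary $\mathtt{mrec}$ sugar so as to confirm that every call restores all mutual-recursion bindings and that therefore every subsequent instruction transition remains well-formed. Neither step is deep, but both are the kind of place where an informal argument can silently hide an error.
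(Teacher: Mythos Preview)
Your proposal is correct and follows the same approach as the paper's proof: a reduction from the halting problem for two-counter Minsky machines, with counters encoded as iterated $\mathit{succ}$-closures, $c_{\it inc}$ adding one layer, and $c_{\it dec}$ peeling one off via the handler while discarding the continuation. Your write-up is in fact considerably more detailed than the paper's own sketch; the only notational wrinkle is that the paper takes ``$f_{()}\;x_0\;x_1$'' to literally stand for $()$ rather than a call to an extra function $f_{()}$, but this does not affect the argument.
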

We note that $\mathit{MM}$ is $\vdashst$-typable.  Indeed, any program in $\mathit{MM}$ can be $\vdashst$-typed by giving the operation $\mathit{succ}$ the type $\sigma_{\it nat} \rightarrow \unitty$ and each recursive function $f_i$ the type $\sigma_{\it nat}\rightarrow\sigma_{\it nat}\rightarrow\unitty$ where $\sigma_{\it nat} = \unitty\rightarrow\unitty$.

Next, we recall the notion of \emph{active effect handlers} from \cite{DBLP:journals/pacmpl/Sekiyama024}.  Concretely, the number of active effect handlers is said to be \emph{bounded} for a program if there is a non-negative integer $n$ such that the evaluation of the program only yields intermediate expressions of the form
\[
 \dots (\mathtt{with}\;h_1\;\mathtt{handle} \dots (\mathtt{with}\;h_m\;\mathtt{handle}\;c)\;\dots) \dots
 \]
for $m \leq n$ where $c$ does not contain a with-handle block.  Roughly, the number of active effect handlers measures the number of pending effect handlers on the call stack (when effect handlers are implemented by using a call stack, as often done in real implementations).

It is easy to see that the number of active effect handlers for any program in $\mathit{MM}$ is bounded (in fact, by one), because a with-handle block will only appear in the evaluation by a calling a recursive function whose body is some $c_{dec}$, but then a subsequent recursive function call can happen only by replacing this with-handle block by the body of the return clause or the $\mathit{succ}$ clause.
Because reachability for $\mathit{MM}$ is undecidable as shown in Theorem~\ref{thm:minskymachine}, this disproves the claim that the boundedness of the number of active effect handlers characterizes the decidability of reachability for finitary PCF with effect handlers.

Additionally, a recent paper~\cite{DBLP:journals/pacmpl/LagoG24} contains a result stating that the reachability problem becomes decidable when the operation clauses are restricted to only use the given delimited continuation at tail positions.  This may appear to contradict Theorem~\ref{thm:minskymachine} because the operation (i.e., $\mathit{succ}$) clause in $\mathit{MM}$ does not use the given delimited continuation.  But it actually does not, because \cite{DBLP:journals/pacmpl/LagoG24} restricts the (non-continuation) parameters of operations to base types and thus disallows programs like $\textit{MM}$.\footnote{The same restriction is used in \cite{DBLP:journals/pacmpl/Kobayashi25}.}  The main result of our paper (Theorem~\ref{thm:main}) shows that such restrictions on operation parameter types or delimited continuation usage are not need for ATM to ensure the decidability of reachability.


\section{Related Work}
\label{sec:related}
As mentioned in the introduction, our work is inspired by the prior research on reachability for finitary PCF (without effect handlers).  The problem was shown to be decidable~\cite{DBLP:conf/lics/Ong06,DBLP:conf/popl/Kobayashi09,DBLP:conf/fossacs/Tsukada014}, and the result has served as a foundation of methods for verifying infinite-state higher-order-recursive programs by incorporating techniques like predicate abstraction and CEGAR to abstract infinite data to finite domains~\cite{DBLP:conf/popl/BallR02,DBLP:conf/cav/ClarkeGJLV00,DBLP:conf/pldi/BallMMR01,DBLP:conf/popl/Terauchi10,DBLP:conf/popl/OngR11,DBLP:conf/pldi/KobayashiSU11,DBLP:conf/popl/UnnoTK13,DBLP:conf/popl/RamsayNO14}.  It is worth noting that such studies have extended beyond just reachability (i.e., safety properties) and have also lead to methods for verifying liveness properties by incorporating techniques like binary reachability analysis and automata-theoretic verification method~\cite{DBLP:journals/apal/Vardi91,DBLP:conf/lics/PodelskiR04,DBLP:conf/pldi/CookPR06,DBLP:conf/popl/CookGPRV07,DBLP:conf/esop/KuwaharaTU014,DBLP:conf/popl/MuraseT0SU16}.

Inspired by this success and the popularity of effect handlers, a recent paper by Dal Lago and Ghyselen~\cite{DBLP:journals/pacmpl/LagoG24} has investigated the decidability of reachability for finitary PCF extended with effect handlers, and has found that the problem is undecidable.  An alternative proof of this undecidability is also given in a recent paper by Kobayashi~\cite{DBLP:journals/pacmpl/Kobayashi25}.
In this paper, we have shown that this undecidability comes from the way the standard type systems for effect handlers are designed, and that, surprisingly, the problem becomes decidable when the type system is extended to allow ATM.  As remarked in Section~\ref{subsec:errorinsekiyamaunno}, \cite{DBLP:journals/pacmpl/LagoG24} contains a result stating that the reachability problem becomes decidable if the operation clauses are restricted to only use the captured delimited continuation at tail positions.  However, as we have shown there, this decidability result crucially relies on the fact that operation parameters are restricted to base types in \cite{DBLP:journals/pacmpl/LagoG24} because allowing arbitrary (simple) types for operation parameters makes reachability undecidable even with the restriction on the delimited continuation usage.
The decidability result of our paper shows that such restrictions on operation parameter types or delimited continuation usage are not needed for ATM to ensure the decidability of reachability.

Our decidability result was proven by a novel typing-derivation-directed CPS transformation that transforms an ATM-typable program with effect handlers to a simply-typable program without effect handlers.  Our CPS transformation is based on the one proposed by Kawamata et al.~\cite{DBLP:journals/pacmpl/KawamataUST24} but using typing-derivation dependency to eliminate parametric polymorphism.  The elimination was crucial for reducing the problem to the decidable problem of reachability for (non-polymorphic) finitary PCF.  As mentioned before, \cite{DBLP:journals/pacmpl/KawamataUST24} used higher-rank parametric polymorphism in an essential way to allow pure computations to be used in contexts where effectful ones are expected, and we have observed that this is precisely what subtyping is used for by the ATM type system.  We have adopted the ideas from the (sub)typing-derivation-directed CPS transformation for delimited control presented in the paper by Materzok and Biernacki~\cite{DBLP:conf/icfp/MaterzokB11} to design a new (sub)typing-derivation-directed CPS transformation for effect handlers.  The CPS transformation of \cite{DBLP:conf/icfp/MaterzokB11} does not consider effect handlers, and, as mentioned above, the CPS transformation of \cite{DBLP:journals/pacmpl/KawamataUST24} requires parametric polymorphism.  Our new CPS transformation makes a novel combination of the ideas from these prior CPS transformations.

A recent paper by Sekiyama and Unno~\cite{DBLP:journals/pacmpl/Sekiyama024} proves a similar but strictly weaker decidability result which essentially says that the typability in an ATM type system lacking subtyping is sufficient for decidability of the reachability for finitary PCF with effect handlers.  Our paper proves a stronger result that says that typability in the ATM type system that supports subtyping is sufficient for the decidability, and their result follows as an easy corollary of our result.  We note that subtyping makes a significant difference for an ATM type system.  Indeed, without subtyping, an ATM type system cannot, for example, type a program that uses a pure function in contexts expecting effectful computations with different answer-type modifications.  The lack of subtyping allowed their paper to use a more straightforward CPS transformation obtained by simply dropping parametric polymorphism from the CPS transformation of \cite{DBLP:journals/pacmpl/KawamataUST24}.  By contrast, our CPS transformation makes a novel use of the ideas from the (sub)typing-derivation-driven CPS transformation of \cite{DBLP:conf/icfp/MaterzokB11} to eliminate parametric polymorphism without losing the support for subtyping.
Additionally, we have disproved an erroneous claim made in \cite{DBLP:journals/pacmpl/Sekiyama024} regarding active effect handlers.  Namely, we have shown that the notion does not capture the decidability of reachability for programs with effect handlers, and that boundedness of their number is not the reason why ATM ensures the decidability of reachability.


\section{Conclusion}
\label{sec:conc}
We have studied the reachability problem for finitary PCF extended with effect handlers.  Recent work~\cite{DBLP:journals/pacmpl/LagoG24,DBLP:journals/pacmpl/Kobayashi25} have shown that the problem is undecidable, in stark contrast to the case without effect handlers for which the problem is known to be decidable~\cite{DBLP:conf/lics/Ong06,DBLP:conf/popl/Kobayashi09,DBLP:conf/fossacs/Tsukada014}.  In this paper, we have shown that the undecidability comes from the way the standard type systems for effect handlers are designed.  Concretely, we have shown that, perhaps surprisingly, extending the type system to allow ATM recovers decidability.  A corollary of our decidability result is that, perhaps counter-intuitively, there are program that are typable without ATM but untypable with it, and we have shown concrete examples of such programs.  Our decidability result was proven by a novel typing-derivation-driven CPS transformation, and as another application of the CPS transformation, we have shown that every ATM-typable recursive-definition-free finitary PCF programs terminates while the same does not hold for the simply-typable ones.  Finally, we have disproved a claim made in a recent paper~\cite{DBLP:journals/pacmpl/Sekiyama024} by showing that active effect handlers do not characterize the decidability of reachability for finitary PCF with effect handlers.
We foresee our decidability result to lay a foundation for developing verification methods for programs with effect handlers, just as the decidability result for reachability of finitary PCF has done such for programs without effect handlers.

\textit{Acknowledgments.}  We thank the anonymous reviewers for their suggestions.  We thank Yukiyoshi Kameyama, Naoki Kobayashi, Taro Sekiyama, and Hiroshi Unno for discussions on the work.  This work was supported by JSPS KAKENHI Grant Numbers JP23K24826 and JP20K20625.


\bibliographystyle{splncs04}
\bibliography{main}

\full{
\appendix
\section{$\vdashatm$ Typing Rules}
\label{app:atmtypingrulesfull}

Figure~\ref{fig:atmtypingrulesfull} shows the complete set of $\vdashatm$ typing rules.

\begin{figure}[h]
\[
\begin{array}{c}
  \infer[(\textsc{T-Unit})]{\Gamma \vdashatm () : \unitty}{} \ \
  \infer[(\textsc{T-Bool})]{\Gamma \vdashatm v : \boolty}{v \in \{\tttrue,\ttfalse\}} \\\\
  \infer[(\textsc{T-Var})]{\Gamma \vdashatm x : \tau}{x:\tau \in \Gamma} \ \
  \infer[(\textsc{T-Lam})]{\Gamma \vdashatm \lambda x.c : \tau \rightarrow \rho}{\Gamma,x:\tau \vdashatm c : \rho} \ \
  \infer[(\textsc{T-Rec})]{\Gamma \vdashatm \ttrec{x}{v} : \tau}{\Gamma, x:\tau \vdashatm v : \tau} \\\\
  \infer[(\textsc{T-If})]{\Gamma \vdashatm \ttif{v}{c_1}{c_2} : \rho}{\Gamma \vdashatm v : \boolty & \Gamma \vdashatm c_1 : \rho & \Gamma \vdashatm c_2 : \rho} \\\\
  \infer[(\textsc{T-App})]{\Gamma \vdashatm v_1\;v_2 : \rho}{\Gamma \vdashatm v_1 : \tau\rightarrow\rho & \Gamma \vdashatm v_2 : \tau} \\\\
  
\infer[(\textsc{T-LetP})]
      {\Gamma \vdashatm \ttlet{x}{c_1}{c_2} : \tau_2/\pureeff}
      {\Gamma \vdashatm c_1 : \tau_1/\pureeff & \Gamma,x:\tau_1 \vdashatm c_2 : \tau_2/\pureeff}\\\\
\infer[(\textsc{T-LetIp})]
      {\Gamma \vdashatm \ttlet{x}{c_1}{c_2} : \tau_2/\rho_2\Rightarrow\rho_1'}
      {\Gamma \vdashatm c_1 : \tau_1/\rho_1\Rightarrow\rho_1' & \Gamma,x:\tau_1 \vdashatm c_2 : \tau_2/\rho_2\Rightarrow\rho_1}\\\\
\infer[(\textsc{T-Ret})]
      {\Gamma \vdashatm \ttreturn\;v : \tau/\pureeff}
      {\Gamma \vdashatm v : \tau} \ \
\infer[(\textsc{T-Op})]
      {\Gamma \vdashatm \itop\;v : \tau'/\rho_1\Rightarrow\rho_2}
      {\Sigma(\itop) = \tau\rightarrow\tau'/\rho_1\Rightarrow\rho_2 & \Gamma \vdashatm v : \tau}\\\\
\infer[(\textsc{T-Hdlr)}]
      {\Gamma \vdashatm \{ \ttreturn\;x = c, \overline{\itop_i\;x_i\;k_i = c_i}\}}
      {\overline{\Sigma(\itop_i) = \tau_i\rightarrow\tau_i'/\rho_i\Rightarrow\rho_i'} & \overline{\Gamma, x_i:\tau_i, k_i:\tau_i'\rightarrow\rho_i \vdashatm c_i : \rho_i'}}\\\\
\infer[(\textsc{T-Han})]
      {\Gamma \vdashatm \ttwithhandle{h}{c} : \rho'}
      {\Gamma \vdashatm h & \Gamma \vdashatm c : \tau/\rho\Rightarrow\rho' & \Gamma,x:\tau \vdashatm c' : \rho & \ttreturn\;x = c' \in h}\\\\
\infer[(\textsc{T-VSub})]
      {\Gamma \vdashatm v : \tau'}
      {\Gamma \vdashatm v : \tau & \tau \leq \tau'}\ \
\infer[(\textsc{T-CSub})]
      {\Gamma \vdashatm c : \rho'}
      {\Gamma \vdashatm c : \rho & \rho \leq \rho'} 
\end{array}
\]
\caption{The typing rules of the ATM type system $\vdashatm$.}
\label{fig:atmtypingrulesfull}
\end{figure}

\section{$\vdashatm$ Subtyping Rules}
\label{app:atmsubtypingrulesfull}

Figure~\ref{fig:atmsubtypingrulesfull} shows the complete set of $\vdashatm$ subtyping rules.

\begin{figure}[h]
\[
\begin{array}{c}
\infer[(\textsc{S-Base})]
      {b \leq b}
      {}
      \ \
\infer[(\textsc{S-Arr})]
      {\tau_1 \rightarrow \rho_1 \leq \tau_2 \rightarrow \rho_2}
      {\tau_2 \leq \tau_1 & \rho_1 \leq \rho_2}
      \ \
\infer[(\textsc{S-Pure})]
      {\tau_1/\pureeff \leq \tau_2/\pureeff}
      {\tau_1 \leq \tau_2}
      \\\\
\infer[(\textsc{S-Ipure})]
      {\tau_1/\rho_1\Rightarrow\rho_1' \leq \tau_2/\rho_2\Rightarrow\rho_2'}
      {\tau_1 \leq \tau_2 & \rho_2 \leq \rho_1 & \rho_1' \leq \rho_2'}
      \ \ 
\infer[(\textsc{S-Embed})]
      {\tau_1/\pureeff \leq \tau_2/\rho_1\Rightarrow\rho_2}
      {\tau_1 \leq \tau_2 & \rho_1 \leq \rho_2}
\end{array}
\]
\caption{The subtyping rules of $\vdashatm$.}
\label{fig:atmsubtypingrulesfull}
\end{figure}

\section{Extending $\vdashst$-Typable $\plnamenoeff$ with Records}
\label{app:recordext}
We extend the operational semantics by adding the following rule.
\[
\infer[(\textsc{E-Proj})]
{\{ \overline{l_i = v_i}\}.l_i \rightarrow v_i }{}
\]
We extend the simple type system $\vdashst$ by adding the following rules.
\[
\infer[(\textsc{St-Record})]
      {\Gamma \vdashst \{ \overline{l_i = v_i} \} : \{ \overline{l_i:\sigma_i} \}}{\overline{\Gamma \vdashst v_i : \sigma_i} } \ \
\infer[(\textsc{St-Proj})]
      {\Gamma \vdashst v.l_i : \sigma_i}
      {\Gamma \vdashst v : \{\overline{l_i : \sigma_i} \}}
\]

\section{CPS Transformation for Subtyping Derivations}
\label{app:cpssubtyping}

Figure~\ref{fig:cpssubtypingfull} shows the complete set of CPS transformation rules for subtyping derivations.

\begin{figure}[h]
\[
\begin{array}{rcl}
  \sembrack{b \leq b} & = & \lambda x.x \\
  \sembrack{\tau_1\rightarrow\rho_1 < \tau_2\rightarrow\rho_2} & = & \lambda f.\lambda x.\sembrack{\rho_1 \leq \rho_2}@(f@(\sembrack{\tau_2 \leq \tau_1}@x))\\
  \sembrack{\tau_1/\pureeff \leq \tau_2/\pureeff} & = & \lambda x.\sembrack{\tau_1 \leq \tau_2}@x \\
  \sembrack{\tau_1/\rho_1\Rightarrow\rho_1' \leq \tau_2/\rho_2\Rightarrow\rho_2'} & & \\
  \multicolumn{3}{l}{\hspace{2cm} = \lambda x.\lambda h.\lambda k.\sembrack{\rho_1' \leq \rho_2'}@(x@h@\lambda y.\sembrack{\rho_2 \leq \rho_1}@(k@(\sembrack{\tau_1 \leq \tau_2}@y)))}\\
  \sembrack{\tau_1/\pureeff \leq \tau_2/\rho_1\Rightarrow\rho_2} & = & \lambda x.\lambda h.\lambda k.\sembrack{\rho_1 \leq \rho_2}@(k@(\sembrack{\tau_1 \leq \tau_2}@x)
\end{array}
\]
\caption{The CPS transformation rules for subtyping derivations.}
\label{fig:cpssubtypingfull}
\end{figure}

\section{CPS Transformation for Typing Derivations}
\label{app:cpstyping}

Figure~\ref{fig:cpstypingfull} shows the complete set of CPS transformation rules for typing derivations.

\begin{figure}[h]
\[
\begin{array}{l}
  \sembrackstretch{\Gamma \vdashatm () : \unitty} = () \ \ \ \
  \sembrackstretch{\dfrac{v \in \{\tttrue,\ttfalse\}}{\Gamma \vdashatm v : \boolty}} = v \ \ \ \
  \sembrackstretch{\dfrac{x:\tau \in \Gamma}{\Gamma \vdashatm x : \tau}} = x\\\\
  \sembrackstretch{\dfrac{\Gamma,x:\tau \vdashatm c : \rho}{\Gamma \vdashatm \lambda x.c : \tau \rightarrow \rho}} = \lambda x.\sembrackstretch{\Gamma,x:\tau \vdashatm c : \rho} \\\\
  \sembrackstretch{\dfrac{\Gamma, x:\tau \vdashatm v : \tau}{\Gamma \vdashatm \ttrec{x}{v} : \tau}} = \ttrec{x}{\sembrackstretch{\Gamma, x:\tau \vdashatm v : \tau}}\\\\
  \sembrackstretch{\dfrac{\Gamma \vdashatm v : \boolty \ \ \Gamma \vdashatm c_1 : \rho \ \ \Gamma \vdashatm c_2 : \rho}{\Gamma \vdashatm \ttif{v}{c_1}{c_2}}} \\
  \hspace{2cm} = \ttif{\sembrackstretch{\Gamma \vdashatm v : \boolty}}{\sembrackstretch{\Gamma \vdashatm c_1 : \rho}}{\sembrackstretch{\Gamma \vdashatm c_2 : \rho}}\\\\
  \sembrackstretch{\dfrac{\Gamma \vdashatm v_1 : \tau\rightarrow\rho \ \ \Gamma \vdashatm v_2 : \tau}{\Gamma \vdashatm v_1\;v_2 : \rho}} = \sembrackstretch{\Gamma \vdashatm v_1 : \tau\rightarrow\rho}\;\sembrackstretch{\Gamma \vdashatm v_2 : \tau}\\\\
  \sembrackstretch{\dfrac{\Gamma \vdashatm c_1 : \tau_1/\pureeff \ \ \Gamma,x:\tau_1 \vdashatm c_2 : \tau_2/\pureeff}{\Gamma \vdashatm \ttlet{x}{c_1}{c_2} : \tau_2/\pureeff}} \\
  \hspace{3.5cm} = \left(\lambda x.\sembrackstretch{\Gamma,x:\tau_1 \vdashatm c_2 : \tau_2/\pureeff} \right)\;\sembrackstretch{\Gamma \vdashatm c_1 : \tau_1/\pureeff}\\\\
  \sembrackstretch{\dfrac{\Gamma \vdashatm c_1 : \tau_1/\rho_1\Rightarrow\rho_1' \ \ \Gamma,x:\tau_1 \vdashatm c_2 : \tau_2/\rho_2\Rightarrow\rho_1}{\Gamma \vdashatm \ttlet{x}{c_1}{c_2} : \tau_2/\rho_2\Rightarrow\rho_1'}}\\
  = \lambda h.\lambda k.\sembrackstretch{\Gamma \vdashatm c_1 : \tau_1/\rho_1\Rightarrow\rho_1'}@h@\left(\lambda x.\sembrackstretch{\Gamma,x:\tau_1 \vdashatm c_2 : \tau_2/\rho_2\Rightarrow\rho_1}@h@k\right)\\\\
  \sembrackstretch{\dfrac{\Gamma \vdashatm v : \tau}{\Gamma \vdashatm \ttreturn\;v : \tau/\pureeff}} = \sembrackstretch{\Gamma \vdashatm v : \tau}\\\\
  \sembrackstretch{\dfrac{\Sigma(\itop) = \tau\rightarrow\tau'/\rho_1\Rightarrow\rho_2 \ \ \Gamma \vdashatm v : \tau}{\Gamma \vdashatm \itop\;v : \tau'/\rho_1\Rightarrow\rho_2}} = \lambda h.\lambda k. h.\itop\;\sembrack{\Gamma \vdashatm v : \tau}\;k\\\\
    \left\llbracket \dfrac{\overline{\Sigma(\itop_i) = \tau_i\rightarrow\tau_i'/\rho_i\Rightarrow\rho_i'} \ \ \overline{\Gamma, x_i:\tau_i, k_i:\tau_i'\rightarrow\rho_i \vdashatm c_i : \rho_i'}}{\Gamma \vdashatm \{ \ttreturn\;x = c, \overline{\itop_i\;x_i\;k_i = c_i}\}} \right\rrbracket \\
    \hspace{3.5cm} = \left\{ \overline{\itop_i = \lambda x_i.\lambda k_i.\sembrack{\Gamma, x_i:\tau_i, k_i : \tau_i'\rightarrow\rho_i \vdashatm c_i : \rho_i'}} \right\}\\\\
      \sembrackstretch{\dfrac{\Gamma \vdashatm h \ \ \Gamma \vdashatm c : \tau/\rho\Rightarrow\rho' \ \ \Gamma,x:\tau \vdashatm c' : \rho \ \ \ttreturn\;x = c' \in h}{\Gamma \vdashatm \ttwithhandle{h}{c} : \rho'}}\\
      \hspace{2.5cm} = \sembrackstretch{\Gamma \vdashatm c : \tau/\rho\Rightarrow\rho'}@\sembrackstretch{\Gamma \vdashatm h}@\lambda x.\sembrackstretch{\Gamma,x:\tau \vdashatm c' : \rho}\\\\
      \sembrackstretch{\dfrac{\Gamma \vdashatm c : \rho \ \ \rho \leq \rho'}{\Gamma \vdashatm c : \rho'}} = \sembrackstretch{\rho \leq \rho'}@\sembrackstretch{\Gamma \vdashatm c : \rho} \\\\
      \sembrackstretch{\dfrac{\Gamma \vdashatm v : \tau \ \ \tau \leq \tau'}{\Gamma \vdashatm v : \tau'}} = \sembrackstretch{\tau \leq \tau'}@\sembrackstretch{\Gamma \vdashatm v : \tau}
\end{array}
\]
\caption{The CPS transformation rules for typing derivations.}
\label{fig:cpstypingfull}
\end{figure}

\section{Proof of Theorem~\ref{thm:simulation}}
\label{app:simulation}

We first show the basic type soundness property for $\vdashatm$ by proving the standard preservation and progress properties~\cite{DBLP:journals/iandc/WrightF94}.  Besides sanity checking that $\vdashatm$ enjoys the usual correctness properties expected for a type system, this is useful for proving Theorem~\ref{thm:simulation} because our CPS transformation is typing-derivation-driven and the fact that typability is preserved allows us to pick a typing derivation for $c'$ to CPS transform it when $\vdashatm c : \rho$ and $c \rightarrow c'$.

\begin{lemma}[Progress]
If $\vdashatm c : \rho$, then either
\begin{itemize}
  \item $c \rightarrow c'$ for some $c'$
  \item $c = \ttreturn\;v$ for some $v$, or
  \item $c = E[\itop\;v]$ for some $E, v, \itop$.
\end{itemize}
\end{lemma}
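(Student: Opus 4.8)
The plan is to prove the lemma by induction on the derivation of $\vdashatm c : \rho$, following the textbook progress argument for a call-by-push-value-style calculus. Since the typing context is empty, $c$ is closed, and I first record the usual \emph{canonical forms} facts, obtained by inspecting the value-typing rules up to (\textsc{T-VSub}) and using the shape of $\leq$ (a base type is a subtype only of itself, an arrow type only of an arrow type): a closed value of type $\boolty$ is $\tttrue$ or $\ttfalse$, and a closed value of a function type is a $\lambda$-abstraction $\lambda x.c$ or a recursive definition $\ttrec{x}{v}$. I also use the trivial syntactic fact that every handler contains a return clause $\ttreturn\;x = c'$.

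Case analysis on the last rule of the derivation, which is one of the computation-typing rules, handles all but one case routinely. (\textsc{T-CSub}) derives a judgment about the same term $c$, so the induction hypothesis on its premise closes it. (\textsc{T-Ret}) and (\textsc{T-Op}) put $c$ into the second disjunct ($c = \ttreturn\;v$) and the third ($c = \itop\;v$, i.e.\ $E[\itop\;v]$ with $E = [\,]$). For (\textsc{T-App}), $c = v_1\;v_2$ with $v_1$ a closed value of a function type, so by canonical forms $v_1$ is $\lambda x.c_0$ or $\ttrec{x}{v'}$ and (\textsc{E-LamApp}) or (\textsc{E-RecApp}) applies. For (\textsc{T-If}), $c = \ttif{v}{c_1}{c_2}$ with $v$ a closed Boolean value, hence $\tttrue$ or $\ttfalse$, and (\textsc{E-IfTrue}) or (\textsc{E-IfFalse}) applies. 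For (\textsc{T-LetP}) and (\textsc{T-LetIp}), $c = \ttlet{x}{c_1}{c_2}$, and I apply the induction hypothesis to the sub-derivation for $c_1$: if $c_1 \rightarrow c_1'$ then (\textsc{E-Let}) applies; if $c_1 = \ttreturn\;v$ then (\textsc{E-Ret}) applies; and if $c_1 = E'[\itop\;v]$ then $c = E[\itop\;v]$ with $E = \ttlet{x}{E'}{c_2}$, again an evaluation context, so we are in the third disjunct.

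The one case requiring real care, and hence the main obstacle, is (\textsc{T-Han}), where $c = \ttwithhandle{h}{c_0}$. Applying the induction hypothesis to the sub-derivation $\vdashatm c_0 : \tau/\rho\Rightarrow\rho'$: if $c_0 \rightarrow c_0'$ then (\textsc{E-Han}) applies; if $c_0 = \ttreturn\;v$ then, since $h$ contains a return clause, (\textsc{E-HRet}) applies. The delicate sub-case is $c_0 = E'[\itop\;v]$: since $E'$ contains no with-handle, $h$ is the innermost enclosing handler, and (\textsc{E-Op}) fires precisely when $\itop$ is one of the operations handled by $h$. Because (\textsc{T-Op}) is the only rule that types an operation invocation, any operation occurring in the well-typed $c_0$ lies in $\mathrm{dom}(\Sigma)$; it therefore remains to know that a $\vdashatm$-well-typed handler supplies a clause for every operation of $\Sigma$. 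This handler-coverage property is the one fact in the proof that is not pure bookkeeping — it is exactly what underlies the signature-passing reading of effectful computation types in the CPS transformation and hence Theorem~\ref{thm:typabilitypreservation}, and without it a term such as $\ttwithhandle{\{ \ttreturn\;x = \ttreturn\;x \}}{\itop\;()}$ would be a well-typed stuck term lying outside all three disjuncts. Granting it, the clause $\itop\;x\;k = c \in h$ exists, (\textsc{E-Op}) fires, and $c$ reduces, which closes the case and the induction.
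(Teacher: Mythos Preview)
Your argument is correct and is the standard progress-by-induction proof; the paper itself dispatches the lemma in one line as ``immediate from the definitions of $\vdashatm$ and $\rightarrow$'' without spelling out any case analysis. Your flagging of the handler-coverage condition in the (\textsc{T-Han}) case is well taken---rule (\textsc{T-Hdlr}) as written does not syntactically force every operation in $\Sigma$ to be handled, but (as you observe) it is implicit in the development, being exactly what is needed for $\sembrack{\Gamma \vdashatm h}$ to inhabit the record type $\sembrack{\Sigma}$ in Theorem~\ref{thm:typabilitypreservation}.
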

\begin{proof}
  Immediate from the definitions of $\vdashatm$ and $\rightarrow$.
  \qed
\end{proof}
The following is also immediate.
\begin{lemma}[Substitution]
\label{lem:subst}
  If $\Gamma,x:\tau \vdashatm c : \rho$ and $\Gamma \vdashatm v : \tau$ then
  $\Gamma \vdashatm c[v/x] : \rho$.
\end{lemma}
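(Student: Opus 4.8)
The plan is to prove, by simultaneous induction on typing derivations, a two-part strengthening of the statement that also covers value judgments: (i) if $\Gamma,x{:}\tau \vdashatm v' : \tau'$ and $\Gamma \vdashatm v : \tau$ then $\Gamma \vdashatm v'[v/x] : \tau'$, and (ii) the stated claim for computations. The two parts must be handled together, since the value and computation judgments are mutually recursive (through $(\textsc{T-Lam})$, $(\textsc{T-Ret})$, $(\textsc{T-App})$, $(\textsc{T-Rec})$, and the handler rules). Two routine auxiliary facts are needed first: \emph{weakening} — adding a fresh binding to $\Gamma$ preserves any $\vdashatm$ judgment, proven by a straightforward induction on derivations — and the observation that the subtyping relation $\leq$ carries no context, so it is untouched by substitution. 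Throughout, I would adopt the usual convention that bound variables are chosen fresh, in particular distinct from $x$ and not free in $v$.

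The base cases are immediate. For $(\textsc{T-Var})$, if $\Gamma,x{:}\tau \vdashatm y:\tau'$ then either $y = x$, so $\tau' = \tau$, $y[v/x] = v$, and the goal is exactly the hypothesis $\Gamma \vdashatm v:\tau$; or $y \neq x$, so $y{:}\tau' \in \Gamma$, $y[v/x] = y$, and $(\textsc{T-Var})$ reapplies. The constant rules $(\textsc{T-Unit})$ and $(\textsc{T-Bool})$ are unaffected by substitution. For the structurally recursive rules — $(\textsc{T-If})$, $(\textsc{T-App})$, $(\textsc{T-LetP})$, $(\textsc{T-LetIp})$, $(\textsc{T-Ret})$, $(\textsc{T-Op})$, $(\textsc{T-Hdlr})$, $(\textsc{T-Han})$ — the argument is uniform: substitution commutes with the term constructor (the freshness convention takes care of the bound variables $x_i, k_i$ of a handler and the bound $x$ of a return clause), so I apply the induction hypothesis to each premise and reassemble with the same rule. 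For the binders $(\textsc{T-Lam})$ and $(\textsc{T-Rec})$, from a premise of the form $\Gamma, x{:}\tau, y{:}\tau_1 \vdashatm \cdot : \cdot$ I first weaken $\Gamma \vdashatm v:\tau$ to $\Gamma, y{:}\tau_1 \vdashatm v:\tau$ and reorder the context (the rules look up variables only by name, so exchange is admissible), then apply the induction hypothesis and close with $(\textsc{T-Lam})$ (resp.\ $(\textsc{T-Rec})$). Finally, $(\textsc{T-VSub})$ and $(\textsc{T-CSub})$ follow by applying the induction hypothesis to the typing premise and re-applying the same subsumption step with the unchanged subtyping derivation.

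The only point requiring any attention — and the closest thing to an obstacle — is the bookkeeping around bound variables in the binder, handler, and with-handle cases, i.e.\ the standard $\alpha$-conversion plus weakening/exchange manipulations; there is no genuine mathematical difficulty here, consistent with the paper's remark that the lemma is immediate.
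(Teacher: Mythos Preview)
Your proposal is correct and matches the paper's treatment: the paper simply declares the lemma ``immediate'' without spelling anything out, and what you have written is exactly the standard simultaneous induction on value/computation typing derivations that justifies that remark. There is nothing to add or correct.
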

We now show the preservation property.
\begin{lemma}[Preservation]
\label{lem:preservation}
If $\Gamma \vdashatm c : \rho$ and $c \rightarrow c'$, then $\Gamma \vdashatm c' : \rho$.
\end{lemma}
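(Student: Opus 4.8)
The plan is to prove preservation by induction on the derivation of $c \rightarrow c'$, following the standard subject-reduction recipe, with the only nonstandard cases being the two effect-handler rules $(\textsc{E-HRet})$ and $(\textsc{E-Op})$. First I would establish the usual preparatory machinery. Besides the Substitution Lemma already stated (Lemma~\ref{lem:subst}), I expect to need an \emph{inversion} (generation) lemma that, given a typing derivation $\Gamma \vdashatm c : \rho$ for a syntactic form $c$, extracts the premises of the syntax-directed rule for $c$ \emph{modulo} the subsumption rules $(\textsc{T-VSub})$ and $(\textsc{T-CSub})$. Since subsumption can be applied repeatedly, I would first note that the subtyping relation $\leq$ is reflexive and transitive (a routine induction on the subtyping rules in Figure~\ref{fig:atmsubtypingrulesfull}), so any chain of subsumptions collapses to at most one. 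The inversion lemma then takes the form: if $\Gamma \vdashatm \ttlet{x}{c_1}{c_2} : \rho$, then either there are $\tau_1,\tau_2$ with $\Gamma \vdashatm c_1 : \tau_1/\pureeff$, $\Gamma,x{:}\tau_1 \vdashatm c_2 : \tau_2/\pureeff$, and $\tau_2/\pureeff \leq \rho$, or there are $\tau_1,\tau_2,\rho_1,\rho_1',\rho_2$ with $\Gamma \vdashatm c_1 : \tau_1/\rho_1\Rightarrow\rho_1'$, $\Gamma,x{:}\tau_1 \vdashatm c_2 : \tau_2/\rho_2\Rightarrow\rho_1$, and $\tau_2/\rho_2\Rightarrow\rho_1' \leq \rho$; and similarly for $\ttwithhandle{h}{c}$, $\lambda x.c$ applied, operation invocations, etc.

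With inversion in hand, the structural cases are routine. For $(\textsc{E-Ret})$, inversion on $\Gamma \vdashatm \ttlet{x}{\ttreturn\;v}{c} : \rho$ together with inversion on the derivation of $\ttreturn\;v$ gives $\Gamma \vdashatm v : \tau_1$ and $\Gamma,x{:}\tau_1 \vdashatm c : \rho''$ with $\rho'' \leq \rho$ (in the pure case; the effectful case uses $(\textsc{S-Embed})$ on the $\ttreturn$ side); then Substitution gives $\Gamma \vdashatm c[v/x] : \rho''$, and $(\textsc{T-CSub})$ lifts it to $\rho$. For $(\textsc{E-LamApp})$ and $(\textsc{E-RecApp})$ one combines inversion on application, the variance built into $(\textsc{S-Arr})$, and Substitution. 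For $(\textsc{E-IfTrue})$, $(\textsc{E-IfFalse})$, $(\textsc{E-Let})$, and $(\textsc{E-Han})$ (the congruence rules) one just reassembles using the induction hypothesis, possibly re-applying the same subtyping. The $(\textsc{E-HRet})$ case is similar: inversion on $\ttwithhandle{h}{\ttreturn\;v}$ via $(\textsc{T-Han})$ gives $\Gamma \vdashatm \ttreturn\;v : \tau/\rho\Rightarrow\rho'$, $\Gamma,x{:}\tau \vdashatm c : \rho$ (the return clause body), and $\rho' \leq \rho$-lifted-to-the-goal; since $\ttreturn\;v$ has a pure type, $(\textsc{S-Embed})$ forces $\tau/\pureeff \leq \tau'/\rho\Rightarrow\rho'$ with $\rho \leq \rho'$ from its premises, whence $\Gamma \vdashatm v : \tau$, and Substitution gives $\Gamma \vdashatm c[v/x] : \rho$, which subsumes up to the goal.

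The main obstacle, as expected, is the $(\textsc{E-Op})$ case: showing that if $\Gamma \vdashatm \ttwithhandle{h}{E[\itop\;v]} : \rho'$ and $\itop\;x\;k = c \in h$, then $\Gamma \vdashatm c[v/x,\ \lambda y.\ttwithhandle{h}{E[\ttreturn\;y]}/k] : \rho'$. The delicate point is tracking answer types through the evaluation context $E$. I would prove, as a separate lemma by induction on $E$, that if $\Gamma \vdashatm E[\itop\;v] : \tau/\rho\Rightarrow\rho'$ (the type assigned to the handled computation), then, letting $\Sigma(\itop) = \tau_{\mathit{op}}\rightarrow\tau_{\mathit{op}}'/\rho_a\Rightarrow\rho_b$, we get $\Gamma \vdashatm v : \tau_{\mathit{op}}$ together with a ``typed hole'' property: for every fresh $y$ and every $\rho_y$, if $\Gamma,y{:}\tau_{\mathit{op}}' \vdashatm \ttreturn\;y : \rho_y$ with the answer types lining up (i.e. $\rho_b$ matching the ``innermost'' answer type of $E$ and $\rho$ the outermost), then $\Gamma,y{:}\tau_{\mathit{op}}' \vdashatm E[\ttreturn\;y] : \tau/\rho_b\Rightarrow\rho'$. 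This is exactly the backward-composition discipline of $(\textsc{T-LetIp})$: filling the hole with a computation of answer-type modification $\rho_b\Rightarrow(\text{something})$ composes along $E$ to yield $\rho_b\Rightarrow\rho'$. Granting this, the continuation $\lambda y.\ttwithhandle{h}{E[\ttreturn\;y]}$ can be typed at $\tau_{\mathit{op}}' \rightarrow \rho_a$ by applying $(\textsc{T-Han})$ to the re-handled context (using the return clause of $h$ to bridge $\rho$ and, via $(\textsc{T-Han})$'s ATM, getting result $\rho'$; wait—one must check the answer types here match $\rho_a$, which is precisely what the $(\textsc{T-Hdlr})$ premise $k_i : \tau_i'\rightarrow\rho_i$ demands). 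Then $(\textsc{T-Hdlr})$ on $h$ gives $\Gamma,x{:}\tau_{\mathit{op}},k{:}\tau_{\mathit{op}}'\rightarrow\rho_a \vdashatm c : \rho_b$, and two applications of Substitution together with a final subsumption yield $\Gamma \vdashatm c[v/x,\dots/k] : \rho'$. The bookkeeping of which $\rho$'s match which is the real work; the deep-handler wrapping of $E$ in the continuation is what makes it go through, since it lets the re-handled context be typed with the \emph{same} handler derivation.
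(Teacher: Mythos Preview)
Your approach is essentially the paper's: induction on the reduction derivation, Substitution for the congruence and $\beta$-like cases, and a replacement argument through the evaluation context for $(\textsc{E-Op})$. The paper is much terser; it assumes without loss of generality a derivation shape in which the subderivation for $\itop\;v$ is an instance of $(\textsc{T-Op})$ with conclusion $\tau'/\rho_2\Rightarrow\rho$ (same outer answer type $\rho$ as the handled body), and then performs \emph{derivation surgery}: replace that subderivation by one typing $\ttreturn\;y$ at $\tau'/\rho_2\Rightarrow\rho_2$ via $(\textsc{S-Embed})$, observe that the outer answer type of a $\mathtt{let}$-chain is that of its head, and read off $\Gamma,y{:}\tau'\vdashatm \ttwithhandle{h}{E[\ttreturn\;y]}:\rho_2$. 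Your explicit inversion/typed-hole packaging is fine and arguably cleaner than the WLOG.

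However, the concrete typed-hole statement you wrote is mis-oriented. From $\Gamma\vdashatm E[\itop\;v]:\tau/\rho\Rightarrow\rho'$ and $\Sigma(\itop)=\tau_{\mathit{op}}\to\tau_{\mathit{op}}'/\rho_a\Rightarrow\rho_b$, the $(\textsc{T-LetIp})$ discipline forces the \emph{final} answer type of the head to be the final answer type of the whole chain, so (up to subsumption) $\rho_b=\rho'$; the \emph{initial} answer type $\rho$ is determined by the tail of $E$ and does not change when you swap the head. Plugging $\ttreturn\;y:\tau_{\mathit{op}}'/\rho_a\Rightarrow\rho_a$ into the hole therefore yields
\[
\Gamma,y{:}\tau_{\mathit{op}}'\ \vdashatm\ E[\ttreturn\;y]\ :\ \tau/\rho\Rightarrow\rho_a,
\]
not $\tau/\rho_b\Rightarrow\rho'$ as you wrote. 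With the corrected conclusion, $(\textsc{T-Han})$ (using the same return-clause derivation at type $\rho$) gives $\ttwithhandle{h}{E[\ttreturn\;y]}:\rho_a$, hence the continuation has type $\tau_{\mathit{op}}'\to\rho_a$, matching $k$'s type from $(\textsc{T-Hdlr})$; two substitutions give the clause body at type $\rho_b$, and since $\rho_b\leq\rho'$ (the residual subsumption from the WLOG normalization), $(\textsc{T-CSub})$ closes the case. Your own parenthetical ``wait---one must check the answer types here match $\rho_a$'' was the right instinct; fix the lemma statement accordingly and the rest of your plan goes through.
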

\begin{proof}
  We prove by induction on the derivation of $c \rightarrow c'$.  The cases besides (\textsc{E-Op}) are immediate from Lemma~\ref{lem:subst}.  For the case (\textsc{E-Op}), we may assume without loss of generality that the derivation $\Gamma \vdashatm c : \rho$ is of the form
\[
\infer[(\textsc{T-Han})]{\Gamma \vdashatm \ttwithhandle{h}{E[\itop\;v]} : \rho}
  {\Gamma \vdashatm h & \Gamma,x:\tau \vdashatm c'' : \rho' & \Gamma \vdashatm E[\itop\;v] : \tau/\rho'\Rightarrow \rho}
\]
where $\ttreturn\;x = c'' \in h$ and the subderivation $\Gamma \vdashatm E[\itop\;v] : \tau/\rho'\Rightarrow \rho$ is of the form
\[
\infer{\Gamma \vdashatm E[\itop\;v] : \tau/\rho'\Rightarrow \rho}
      {\begin{array}{c}\mathcal{D}\\\vdots\end{array}}
\]
containing a subderivation $\mathcal{D}$ of the form
\[
\infer[(\textsc{T-Op})]
      {\Gamma \vdashatm \itop\;v : \tau'/\rho_2\Rightarrow\rho}
      {\Gamma(\itop) = \tau_2\rightarrow\tau'/\rho_2\Rightarrow\rho & \Gamma \vdashatm v : \tau_2}
      \]
By replacing this $\mathcal{D}$, by the following derivation $\mathcal{D'}$
\[
\infer[(\textsc{T-Csub})]
      {\Gamma, y:\tau_2 \vdashatm \ttreturn\;y : \tau_2/\rho_2\Rightarrow\rho_2}
      {\Gamma, y:\tau_2 \vdashatm \ttreturn\;y : \tau_2/\pureeff & \tau_2/\pureeff \leq \tau_2/\rho_2\Rightarrow\rho_2}
\]
we obtain the derivation
\[
\infer[(\textsc{T-Han})]{\Gamma' \vdashatm \ttwithhandle{h}{E[\ttreturn\;y] : \rho_2}}
  {\Gamma' \vdashatm h & \Gamma',x:\tau \vdashatm c'' : \rho' & \Gamma' \vdashatm E[\ttreturn\;y] : \tau/\rho'\Rightarrow \rho_2}
  \]
  where $\Gamma' = \Gamma,y:\tau_2 $.
Therefore, by (\textsc{T-Hdlr}) and Lemma~\ref{lem:subst}, we have
\[
\Gamma \vdashatm c[v/x,\lambda y.\ttwithhandle{h}{E[\ttreturn\:y]}/k] : \rho
\]
\qed
\end{proof}
We obtain the type soundness property as a corollary of the progress and the preservation properties.
\begin{corollary}[Type Soundness]
\label{cor:soundness}
If $c$ is $\vdashatm$-typable and $c \rightarrow^* c'$, then either
\begin{itemize}
  \item $c' \rightarrow c''$ for some $\vdashatm$-typable $c''$
  \item $c' = \ttreturn\;v$ for some $v$, or
  \item $c' = E[\itop\;v]$ for some $E, v, \itop$.
\end{itemize}
\end{corollary}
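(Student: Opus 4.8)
The plan is to obtain this directly from the Progress and Preservation lemmas established above, via a routine induction on the length of the reduction sequence $c \rightarrow^* c'$. First I would note that, since $c$ is $\vdashatm$-typable, we have $\vdashatm c : \rho$ for some $\rho$. I would then show, by induction on the number of steps in $c \rightarrow^* c'$, that $c'$ inherits the same type, i.e.\ $\vdashatm c' : \rho$. The base case (zero steps) is immediate since $c' = c$, and the inductive step is a single application of Preservation (Lemma~\ref{lem:preservation}) to the final reduction step, which keeps the type $\rho$ unchanged. In this way typability is propagated along the entire reduction, so that $c'$ is again $\vdashatm$-typable.

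Having established $\vdashatm c' : \rho$, I would then apply the Progress lemma to $c'$. This yields exactly one of three situations: either $c' \rightarrow c''$ for some $c''$, or $c' = \ttreturn\;v$ for some $v$, or $c' = E[\itop\;v]$ for some $E$, $v$, and $\itop$. The last two of these coincide verbatim with the corresponding conclusions of the corollary, so no further work is needed for them.

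The only case requiring slight extra care is the first: the corollary demands not merely that $c'$ reduces, but that it reduces to a $\vdashatm$-\emph{typable} computation $c''$. To supply this, I would invoke Preservation once more---this time on the single step $c' \rightarrow c''$ together with $\vdashatm c' : \rho$---to conclude $\vdashatm c'' : \rho$, so that $c''$ is indeed $\vdashatm$-typable. This discharges all three cases and completes the argument.

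I do not anticipate any genuine obstacle, since the substantive content (notably type preservation under the $(\textsc{E-Op})$ reduction, handled in Lemma~\ref{lem:preservation}) has already been proven. The only thing to keep straight is the bookkeeping of which lemma is used where: Preservation to carry typability across the whole sequence, Progress exactly once at the end to perform the case split, and Preservation one final time inside the reduce-further case to re-establish typability of $c''$.
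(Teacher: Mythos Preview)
Your proposal is correct and matches the paper's approach: the paper simply states that the corollary follows from the progress and preservation lemmas, and your induction on the length of the reduction sequence (using Preservation at each step, then Progress once, then Preservation again for the $c''$ case) is exactly the standard unpacking of that claim. One tiny remark: since $\vdashatm$-typable is defined in the paper to mean $\vdashatm c : \tau/\pureeff$, you may as well take $\rho = \tau/\pureeff$ from the start, which makes it immediate that the $c''$ you obtain is again $\vdashatm$-typable in the paper's sense.
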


The following lemma states that substitution and CPS transformation commute.  
\begin{lemma}
\label{lem:commute}
Let $\mathcal{D}_c$ be a typing derivation of $\Gamma, x:\tau \vdashatm c: \rho$, and $\mathcal{D}_v$ be a typing derivation of $\Gamma \vdashatm v : \tau$.  Then, $\sembrack{\mathcal{D}_c[\mathcal{D}_v/x]} = \sembrack{\mathcal{D}_c}[\sembrack{\mathcal{D}_v}/x]$ where $\mathcal{D}_c[\mathcal{D}_v/x]$ denotes the derivation obtained by replacing occurrences of $\Gamma, x:\tau \vdashatm x: \tau$ by $\mathcal{D}_v$.
\end{lemma}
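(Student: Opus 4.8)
The plan is to prove the lemma by induction on the structure of the typing derivation $\mathcal{D}_c$. I would induct on the derivation rather than on the expression $c$, because the subsumption rules $(\textsc{T-VSub})$ and $(\textsc{T-CSub})$ leave the subject expression unchanged, so an induction on $c$ would not be well-founded. Throughout I would adopt the usual variable convention: all binders occurring in $\mathcal{D}_c$ — both the source-level binders and the fresh administrative variables ($h$, $k$, $y$, etc.) introduced on the right-hand sides of the CPS rules — are chosen distinct from $x$ and from the free variables of $v$ (equivalently, of $\sembrack{\mathcal{D}_v}$), so that substituting $\sembrack{\mathcal{D}_v}$ for $x$ causes no capture. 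Recall that $\mathcal{D}_c[\mathcal{D}_v/x]$ replaces each leaf $\Gamma, x:\tau \vdashatm x : \tau$ by $\mathcal{D}_v$ (suitably weakened under binders, which does not affect its translation, since the CPS rule for $(\textsc{T-Var})$ ignores the type environment and the other rules are structural); by Lemma~\ref{lem:subst} this is again a valid derivation, of $\Gamma \vdashatm c[v/x] : \rho$.

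For the base cases: if $\mathcal{D}_c$ is the axiom $(\textsc{T-Var})$ for the variable $x$, then $\mathcal{D}_c[\mathcal{D}_v/x] = \mathcal{D}_v$ and $\sembrack{\mathcal{D}_c} = x$, so both sides equal $\sembrack{\mathcal{D}_v}$. If $\mathcal{D}_c$ is an axiom not involving $x$ (namely $(\textsc{T-Var})$ for some $y \neq x$, $(\textsc{T-Unit})$, or $(\textsc{T-Bool})$), then $\mathcal{D}_c[\mathcal{D}_v/x] = \mathcal{D}_c$ and $x$ does not occur in $\sembrack{\mathcal{D}_c}$, so both sides equal $\sembrack{\mathcal{D}_c}$. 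For the inductive step, in each remaining rule the right-hand side of the corresponding CPS rule is a fixed term context — built from $\lambda$-abstractions over fresh administrative variables, ordinary applications, the $\mathtt{if}$/$\mathtt{let}$/record/projection formers, $\mathtt{rec}$, and static applications $@$ (whose left operands are closed in the subsumption cases) — whose holes are filled by the translations of the immediate subderivations of $\mathcal{D}_c$. By the variable convention this context shares no bound variable with $x$ or with $\sembrack{\mathcal{D}_v}$, so substituting $\sembrack{\mathcal{D}_v}$ for $x$ distributes over the context and pushes into each hole; applying the induction hypothesis to each immediate subderivation then produces exactly $\sembrack{\mathcal{D}_c[\mathcal{D}_v/x]}$, using that the operation $\mathcal{D}\mapsto\mathcal{D}[\mathcal{D}_v/x]$ is itself defined by this same structural recursion.

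The subsumption rules need one extra observation: $\sembrack{\tau \leq \tau'}$ and $\sembrack{\rho \leq \rho'}$ are closed terms — their definitions depend only on the types, never on the environment or the subject — so $x$ does not occur in them, whence $(\sembrack{\rho \leq \rho'}@\sembrack{\mathcal{D}'})[\sembrack{\mathcal{D}_v}/x] = \sembrack{\rho \leq \rho'}@(\sembrack{\mathcal{D}'}[\sembrack{\mathcal{D}_v}/x])$, and the induction hypothesis on $\mathcal{D}'$ finishes the case. The one point genuinely requiring care, and which I expect to be the main obstacle, is the interaction between the substitution and the \emph{static application} $@$, which is resolved eagerly during the translation: one must check that substituting $\sembrack{\mathcal{D}_v}$ for $x$ neither creates nor destroys a static redex, so that $@$-resolution commutes with the substitution (justifying the ``distributes over the context'' step above). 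This holds because in any translated term a free source-level value variable never appears as the head of a static application: the left operand of every $@$ introduced by the CPS rules is either a closed subtyping-derivation translation or the translation of a \emph{computation}, and the latter is never a bare variable (it is always a $\lambda$-abstraction, an $\mathtt{if}$, an ordinary application, a record, or a projection). Everything else is a mechanical unfold-both-sides-and-apply-the-IH argument; I would also note in passing the small auxiliary fact, used silently above, that $\sembrack{\mathcal{D}}$ is invariant under weakening of the type environment of $\mathcal{D}$.
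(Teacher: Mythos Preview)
Your proposal is correct and takes the same approach as the paper: induction on the derivation $\mathcal{D}_c$. The paper's proof is simply ``By induction on the derivation $\mathcal{D}_c$'' with no further detail, so your careful treatment of the base cases, the role of the variable convention, the closedness of subtyping translations, and the interaction of substitution with static application~$@$ all go well beyond what the paper spells out while remaining fully in line with its intended argument.
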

\begin{proof}
  By induction on the derivation $\mathcal{D}_c$
  \qed
\end{proof}

Next we prove that one step of the source program evaluation can be simulated by the transformed program.
\begin{lemma}[One-Step Forward Simulation]
\label{lem:onestepforward}
Let $\mathcal{D}$ be a derivation of  $\Gamma \vdashatm c : \rho$ and $c \rightarrow c'$.  If $\rho$ is pure, then there is a derivation $\mathcal{D'}$ of $\Gamma \vdashatm c' : \rho$ such that $\sembrack{\mathcal{D}}\rightarrow^+\sembrack{\mathcal{D}'}$.  Otherwise, $\rho$ is effectful, and there is a derivation $\mathcal{D'}$ of $\Gamma \vdashatm c' : \rho$ such that $\sembrack{\mathcal{D}}@v_h@v_k \rightarrow^+ \sembrack{\mathcal{D}'}@v_h@v_k$ for any values $v_h$ and $v_k$.
\end{lemma}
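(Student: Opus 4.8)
The plan is to prove Lemma~\ref{lem:onestepforward} by induction on the derivation of $c \rightarrow c'$, with a case analysis on the last reduction rule used, threading throughout the distinction between the pure and effectful sub-cases according to the shape of $\rho$. For the congruence rules $(\textsc{E-Let})$ and $(\textsc{E-Han})$, I would appeal to the induction hypothesis on the sub-reduction together with the compositional shape of the CPS transformation: for $(\textsc{E-Let})$ the transformation of $\ttlet{x}{c_1}{c}$ puts $\sembrack{\mathcal{D}_{c_1}}$ either in a direct applied position (rule $\textsc{T-LetP}$) or in a position like $\lambda h.\lambda k.\sembrack{\mathcal{D}_{c_1}}@h@(\dots)$ (rule $\textsc{T-LetIp}$), so a step in $c_1$ propagates to a step (or steps, after feeding $v_h,v_k$) in the whole term; and since preservation (Lemma~\ref{lem:preservation}) gives us a derivation of the same type for the reduct, we can pick $\mathcal{D}'$ accordingly. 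For $(\textsc{E-Han})$ the argument is analogous using the $\textsc{T-Han}$ transformation rule $\sembrack{\mathcal{D}_c}@\sembrack{\mathcal{D}_h}@\lambda x.\sembrack{\mathcal{D}_{c'}}$.

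For the administrative/$\beta$-style rules I would compute directly. For $(\textsc{E-Ret})$, $(\textsc{E-LamApp})$, $(\textsc{E-RecApp})$, $(\textsc{E-IfTrue})$, $(\textsc{E-IfFalse})$, the transformed term reduces by one or more ordinary $\plnamenoeff$ $\beta$/record steps, and Lemma~\ref{lem:commute} (substitution commutes with CPS) is exactly what is needed to identify the resulting term with $\sembrack{\mathcal{D}'}$ where $\mathcal{D}'$ is the derivation for the substituted/selected reduct obtained via Lemma~\ref{lem:subst}. The subtlety here is that the source derivation $\mathcal{D}$ may end in one or more subsumption steps $(\textsc{T-CSub})$ before the structural rule, so I would first argue (a routine inversion/normalization) that it suffices to handle the case where the relevant structural typing rule is at the root, pushing the transformed subsumption coercions $\sembrack{\rho\leq\rho'}$ to the outside; because these coercions are $\lambda$-terms that, when applied to the appropriate number of arguments, $\beta$-reduce without getting stuck, they do not block the simulation — they just contribute extra administrative steps on both sides, preserving the $\rightarrow^+$ relation. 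This is also where the pure-versus-effectful bookkeeping matters: when $\rho$ is effectful the statement is phrased with $@v_h@v_k$ applied, so after supplying these arguments the coercions and the CPS images of $\textsc{T-Op}$/$\textsc{T-LetIp}$ actually fire.

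The genuinely delicate case is $(\textsc{E-Op})$, i.e. $\ttwithhandle{h}{E[\itop\;v]} \rightarrow c_i[v/x_i, \lambda y.\ttwithhandle{h}{E[\ttreturn\;y]}/k_i]$. On the CPS side, $\sembrack{\mathcal{D}}$ for the with-handle block is $\sembrack{\mathcal{D}_{E[\itop\;v]}}@\sembrack{\mathcal{D}_h}@\lambda x.\sembrack{\mathcal{D}_{c'}}$; I would need a sub-lemma describing how $\sembrack{\mathcal{D}_{E[\itop\;v]}}$ behaves — essentially that the CPS image of a derivation of $E[\itop\;v]$, applied to a handler record and a continuation, reduces to $\sembrack{\mathcal{D}_{\itop\;v}}$ applied to that record and to the CPS image of the ``rest of $E$''. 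Since the transformation of $\itop\;v$ is $\lambda h.\lambda k.h.\itop\;\sembrack{\mathcal{D}_v}\;k$, this looks up $\itop$ in the handler record, whose entry by the $\textsc{T-Hdlr}$ transformation is $\lambda x_i.\lambda k_i.\sembrack{\mathcal{D}_{c_i}}$, and passes it $\sembrack{\mathcal{D}_v}$ and the reified continuation. The crux is showing that this reified continuation is (up to the $\rightarrow^+$ equivalence and the commutation lemma) exactly $\sembrack{\;\cdot\;}$ of the derivation for $\lambda y.\ttwithhandle{h}{E[\ttreturn\;y]}$ that Preservation produced in the proof of Lemma~\ref{lem:preservation} — which is why the paper went through the trouble of constructing that specific reduct derivation $\mathcal{D}'$ there. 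I expect this to be the main obstacle: carefully matching the syntactic continuation the CPS machinery builds by peeling off $E$ against the semantic continuation $\lambda y.\ttwithhandle{h}{E[\ttreturn\;y]}$, keeping track of the subsumption coercions that decorate the derivation of $E[\itop\;v]$, and verifying that no stuck state arises. Once that identification is in place, $\mathcal{D}'$ is obtained by substituting via Lemma~\ref{lem:subst} into the $\textsc{T-Hdlr}$ derivation for $c_i$ and the equality $\sembrack{\mathcal{D}'} = \sembrack{\mathcal{D}_{c_i}}[\sembrack{\mathcal{D}_v}/x_i][\dots/k_i]$ follows from Lemma~\ref{lem:commute}, completing the case and hence the induction.
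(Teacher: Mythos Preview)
Your proposal is correct and follows essentially the same approach as the paper: the paper also fixes $\mathcal{D}'$ via Preservation and proceeds by simultaneous induction on the reduction derivation and on $\mathcal{D}$, treating the case where the root of $\mathcal{D}$ is (\textsc{T-CSub}) first (your ``inversion/normalization'' step) and then doing case analysis on the last reduction rule, invoking the commutation lemma (Lemma~\ref{lem:commute}) for the substitution cases. The paper only spells out the (\textsc{T-CSub}), (\textsc{E-Let}), and (\textsc{E-Ret}) cases and dismisses the remaining ones---including your ``genuinely delicate'' (\textsc{E-Op}) case---with ``similarly shown by applying Lemma~\ref{lem:commute}'', so your write-up is in fact more explicit there than the paper's own proof.
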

\begin{proof}
Let us fix a derivation $\mathcal{D}'$ for $c'$ constructed by Lemma~\ref{lem:preservation}.  We prove by simultaneous induction on the derivation of $c \rightarrow c'$ and the derivation $\mathcal{D}$.

First, we consider the case the root of $\mathcal{D}$ is an instance of (\textsc{T-CSub}).  $\mathcal{D}$ must be of the form
\[
\infer[(\textsc{T-CSub})]
      {\Gamma \vdashatm c : \rho}
      {\Gamma \vdashatm c : \rho' & \rho' \leq \rho}
\]
Therefore, $\sembrack{\mathcal{D}} = \sembrack{\rho' \leq \rho} @ \sembrack{\Gamma \vdashatm c : \rho'}$ and $\sembrack{\mathcal{D'}} = \sembrack{\rho' \leq \rho} @ \sembrack{\Gamma \vdashatm c' : \rho'}$.
If $\rho$ is pure, then we have
\[
  \sembrack{\mathcal{D}} = \sembrack{\rho' \leq \rho} @ \sembrack{\Gamma \vdashatm c : \rho'} \rightarrow^+ \sembrack{\rho' \leq \rho} @ \sembrack{\Gamma \vdashatm c' : \rho'} = \sembrack{\mathcal{D'}}
\]
Otherwise, $\rho$ is effectful and is of the form $\tau/\rho_1\Rightarrow\rho_2$ for some $\tau$, $\rho_1$, and $\rho_2$.  If $\rho'$ is a pure computation type of the form $\tau'/\pureeff$, then
\[
\begin{array}{rll}
  \sembrack{\mathcal{D}}@v_h@v_k & = & \sembrack{\rho' \leq \rho}@\sembrack{\Gamma \vdashatm c : \rho'}@v_h@v_k \\
  & = & \sembrack{\rho_1 \leq \rho_2}@(v_k\;(\sembrack{\tau' \leq \tau}@\sembrack{\Gamma \vdashatm c : \rho'})) \\
  & \rightarrow^+ & \sembrack{\rho_1 \leq \rho_2}@(v_k\;(\sembrack{\tau' \leq \tau}@\sembrack{\Gamma \vdashatm c' : \rho'})) \\
  & = & \sembrack{\rho' \leq \rho}@\sembrack{\Gamma \vdashatm c' : \rho'}@v_h@v_k\\
  & = & \sembrack{\mathcal{D'}}@v_h@v_k
\end{array}
\]
Otherwise, $\rho'$ is also effectful and is of the form $\tau'/\rho_1'\Rightarrow\rho_2'$.  Then,
\[
\begin{array}{rll}
  \sembrack{\mathcal{D}}@v_h@v_k & = & \sembrack{\rho' \leq \rho}@\sembrack{\Gamma \vdashatm c : \rho'}@v_h@v_k \\
  & = & \sembrack{\rho_2' \leq \rho_2}@(\sembrack{\Gamma \vdashatm c : \rho'}@v_h@\lambda y.\sembrack{\rho_1 \leq \rho_1'}@(k@(\sembrack{\tau\leq\tau'}@y))\\
  & \rightarrow^+ & \sembrack{\rho_2' \leq \rho_2}@(\sembrack{\Gamma \vdashatm c' : \rho'}@v_h@\lambda y.\sembrack{\rho_1 \leq \rho_1'}@(k@(\sembrack{\tau\leq\tau'}@y))\\
  & = & \sembrack{\rho' \leq \rho}@\sembrack{\Gamma \vdashatm c' : \rho'}@v_h@v_k\\
  & = & \sembrack{\mathcal{D'}}@v_h@v_k
\end{array}
\]
In what follows, we assume that the last rule used in $\mathcal{D}$ is not (\textsc{T-CSub}).  We prove by case analysis on the last rule of $c \rightarrow c'$.

Suppose that it is (\textsc{E-Let}).  Then, $c \rightarrow c'$ is of the form
\[
\infer[(\textsc{E-Let})]
      {\ttlet{x}{c_1}{c_2} \rightarrow \ttlet{x}{c_1'}{c_2}}
      {c_1 \rightarrow c_1'}
\]
If $\rho$ is pure and is of the form $\tau/\pureeff$, then $\mathcal{D}$ must be of the form
\[
\infer[(\textsc{T-LetP})]
      {\Gamma \vdashatm \ttlet{x}{c_1}{c_2} : \tau/\pureeff}
      {\Gamma \vdashatm c_1 : \tau_1/\pureeff & \Gamma,x:\tau_1 \vdashatm c_2 : \tau/\pureeff }
\]
Therefore,
\[
\begin{array}{rll}
  \sembrack{\mathcal{D}} & = & (\lambda x.\sembrack{\Gamma,x:\tau_1 \vdashatm c_2 : \tau/\pureeff})\;\sembrack{\Gamma \vdashatm c_1 : \tau_1/\pureeff} \\
  & \rightarrow^+ & (\lambda x.\sembrack{\Gamma,x:\tau_1 \vdashatm c_2 : \tau/\pureeff})\;\sembrack{\Gamma \vdashatm c_1' : \tau_1/\pureeff}\\
  & = & \sembrack{\mathcal{D'}}
\end{array}
\]
If $\rho$ is effectful and is of the form $\tau/\rho_1\Rightarrow\rho_2$, then $\mathcal{D}$ must be of the form
\[
\infer[(\textsc{T-LetIp})]
      {\Gamma \vdashatm \ttlet{x}{c_1}{c_2} : \tau/\rho_1\Rightarrow\rho_2}
      {\Gamma \vdashatm c_1 : \tau_1/\rho'\Rightarrow\rho_2 & \Gamma,x:\tau_1 \vdashatm c_2 : \tau/\rho_1\Rightarrow\rho'}
\]
Therefore,
\[
\begin{array}{ll}
  \multicolumn{2}{l}{\sembrack{\mathcal{D}}@v_h@v_k}\\
  \hspace{0.3cm}= & \sembrack{\Gamma \vdashatm c_1 : \tau_1/\rho'\Rightarrow\rho_2}@v_h@(\lambda x.\sembrack{\Gamma,x:\tau_1 \vdashatm c_2 : \tau/\rho_1\Rightarrow\rho'}@v_h@v_k)\\
  \hspace{0.3cm}\rightarrow^+ & \sembrack{\Gamma \vdashatm c_1' : \tau_1/\rho'\Rightarrow\rho_2}@v_h@(\lambda x.\sembrack{\Gamma,x:\tau_1 \vdashatm c_2 : \tau/\rho_1\Rightarrow\rho'}@v_h@v_k)\\
  \hspace{0.3cm}= & \sembrack{\mathcal{D'}}@v_h@v_k
\end{array}
\]

Next, suppose that the last rule is (\textsc{E-Ret}).  Then, $c \rightarrow c'$ must be of the form
\[
\infer{\ttlet{x}{\ttreturn\;v}{c_2} \rightarrow c_2[v/x]}
      {}
\]
If $\rho$ is a pure type of the form $\tau/\pureeff$ then $\mathcal{D}$ must be of the form
\[
\infer{\Gamma \vdashatm \ttlet{x}{\ttreturn\;v}{c_2} : \tau/\pureeff}
      {\infer{\Gamma \vdashatm \ttreturn\;v: \tau_1/\pureeff}{\Gamma \vdashatm v : \tau_1} & \Gamma,x:\tau_1 \vdashatm c_2 : \tau/\pureeff}
\]
Therefore,
\[
\begin{array}{rll}
\sembrack{\mathcal{D}} & = & (\lambda x.\sembrack{\Gamma,x:\tau_1 \vdashatm c_2 : \tau/\pureeff})\;\sembrack{\Gamma \vdashatm v : \tau_1} \\
& \rightarrow & \sembrack{\Gamma,x:\tau_1 \vdashatm c_2 : \tau/\pureeff}[\sembrack{\Gamma \vdashatm v : \tau_1}/x] \\
& = & \sembrack{\mathcal{D'}}
\end{array}
\]
where the last equation follows by Lemma~\ref{lem:commute}.
If $\rho$ is an effectful type of the form $\tau/\rho_1\Rightarrow\rho_2$ then $\mathcal{D}$ must be of the form
\[
\infer{\Gamma \vdashatm \ttlet{x}{\ttreturn\;v}{c_2} : \tau/\rho_1\Rightarrow \rho_2}
      {\infer{\Gamma \vdashatm \ttreturn\;v: \tau_1/\rho'\Rightarrow\rho_2}{
          \mathcal{D}_1 & \mathcal{D}_2}
        &
        \Gamma,x:\tau_1 \vdashatm c_2 : \tau/\rho_1\Rightarrow\rho'
      }
\]
where $\mathcal{D}_1$ and $\mathcal{D}_2$ are as follows
\[
\setlength{\arraycolsep}{0.5cm}
\begin{array}{cc}
\infer{\Gamma \vdashatm \ttreturn\;v : \tau_1/\pureeff}
      {\Gamma \vdashatm v : \tau_1} 
&
\infer{\tau_1/\pureeff \leq \tau_1/\rho'\Rightarrow\rho_2}
      {\rho' \leq \rho_2}
\end{array}
\]
Therefore,
\[
\begin{array}{ll}
  \multicolumn{2}{l}{\sembrack{\mathcal{D}}@v_h@v_k} \\
  \hspace{0.5cm} = & \sembrack{\rho' \leq \rho_2}@((\lambda x.\sembrack{\Gamma,x:\tau_1 \vdashatm c_2 : \tau/\rho_1\Rightarrow\rho'}@v_h@v_k)\;\sembrack{\Gamma \vdashatm v : \tau_1})\\
 \hspace{0.5cm} \rightarrow & \sembrack{\rho' \leq \rho_2}@(\sembrack{\Gamma,x:\tau_1 \vdashatm c_2 : \tau/\rho_1\Rightarrow\rho'}[\sembrack{\Gamma \vdashatm v : \tau_1}/x]@v_h@v_k) \\
 \hspace{0.5cm} = & \sembrack{\mathcal{D}'}@v_h@v_k
\end{array}
\]
where the last equation follows by Lemma~\ref{lem:commute} again.
The other cases can be similarly shown by applying Lemma~\ref{lem:commute}.
\qed      
\end{proof}
The forward direction of the simulation theorem, that is, item \ref{item:forward} of Theorem~\ref{thm:simulation}, follows by repeated applications of Lemma~\ref{lem:onestepforward}.

Next we prove the backward direction, that is, item \ref{item:backward} of Theorem~\ref{thm:simulation}.
Because $c$ is $\vdashatm$-typable, by Corollary~\ref{cor:soundness}, the evaluation of $c$ either (1) diverges, (2) gets stuck by reaching some $E[\itop\;v]$, or (3) halts by returning a value.  Note that the evaluation relation $\rightarrow$ is deterministic.
If (1) is true, then by the forward direction of the theorem that we have just shown, it must be the case that the evaluation of $\sembrack{\;\vdashatm c : \tau/\pureeff}$ also diverges.
Next, suppose that (2) is true.  Then, by Lemma~\ref{lem:preservation}, it must be the case that $\vdashatm E[\itop\;v] : \tau/\pureeff$, but such a typing is not possible because $E[\itop\;v]$ can only be given an effectful computation type.
Finally, suppose that (3) is true.  Let $c \rightarrow^* \ttreturn\;v$.  It suffices to show that $\sembrack{\;\vdashatm v : \tau} = v'$ where $\sembrack{\;\vdashatm c : \tau/\pureeff} \rightarrow^* \ttreturn\;v'$.  But, $\sembrack{\;\vdashatm c : \tau/\pureeff}  \rightarrow^* \sembrack{\;\vdashatm \ttreturn\;v : \tau/\pureeff}$ by Lemma~\ref{lem:onestepforward}.  Therefore, by the determinism of the evaluation relation, it must be the case that $\sembrack{\;\vdashatm v : \tau} = v'$.
This complete the proof of Theorem~\ref{thm:simulation}.

\section{Proof of Theorem~\ref{thm:minskymachine}}
\label{app:minskymachine}

We note that the proof of the theorem, as well as the construction of the class of programs $\mathit{MM}$, is an adaptation of the proof of the undecidability of the reachability problem for finitary PCF extended with exceptions given in \cite{DBLP:journals/pacmpl/Kobayashi25}.  Namely, we adapt their proof to our setting by implementing the exceptions in their programs by effect handlers, following the usual approach of implementing exceptions by effect handlers~\cite{DBLP:journals/entcs/Pretnar15}.\footnote{The paper \cite{DBLP:journals/pacmpl/Kobayashi25} contains a proof that reachability for finitary PCF extended with effect handlers is undecidable (with a non-ATM type system), but they use a different class of programs.  In particular, the restriction on operation parameters mentioned in Section~\ref{subsec:errorinsekiyamaunno} prevents them from using $\mathit{MM}$.}

First, we recall 2-register Minsky machines~\cite{minsky1967computation}.  A \emph{2-register Minsky machine} is a tuple $M = (Q,\delta,q_0)$ where $Q$ is a finite set of states, $q_0 \in Q$ is the initial state, $\delta$ is the transition function which maps each state $q \in Q$ to the an instruction of one of the following three forms:
\begin{description}
\item[$\textsf{Inc}\;r_i;\textsf{goto}\;q'$] Increment register $r_i$ ($i \in \{0,1\}$) and go to state $q'$ ($q' \in Q$).
\item[$\textsf{If}\;r_i=0\;\textsf{then}\;\textsf{goto}\;q_1\;\textsf{else}\;\textsf{dec}\;r_i;\textsf{goto}\;q_2$] Check the value of register $r_i$ ($i \in \{0,1\}$).  If the value is $0$ then go to state $q_1$ ($q_1 \in Q$).  Otherwise, decrement $r_i$ and go to state $q_2$ ($q_2 \in Q$).
\item[$\textsf{Halt}$] Halt the machine.
\end{description}
A \emph{configuration} of the machine is a triple $(q,n_0,n_1)$ where $q \in Q$ and $n_0$ and $n_1$ are non-negative integers denoting the values of the two registers.  The computation of the machine starts from the initial configuration $(q_0,0,0)$ and terminates when the $\textsf{Halt}$ instruction is reached.  The \emph{halting problem} for 2-counter Minsky machines is to decide if the computation of a given 2-counter Minsky machine terminates.
\begin{theorem}[\cite{minsky1967computation}]
The halting problem for 2-counter Minsky machines is undecidable.
\end{theorem}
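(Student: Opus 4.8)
The plan is to prove undecidability by reduction from a problem already known to be undecidable, namely the halting problem for deterministic Turing machines, using the classical two-stage construction underlying Minsky's theorem. In the first stage I would simulate an arbitrary Turing machine by a counter machine with some fixed number $k$ of counters, where $k$ depends only on the machine's presentation and which uses the same instruction set (increment, combined zero-test-and-decrement, halt) as defined just above the theorem. The idea is to represent the tape contents to the left and to the right of the head as two integers, each obtained by reading the relevant nonblank portion of the tape as a numeral in base $|\Gamma|$ (with $\Gamma$ the tape alphabet). Writing a symbol, moving the head, and reading the symbol under the head then become the arithmetic operations of multiplying or dividing by $|\Gamma|$, adding or subtracting a bounded digit, and extracting a remainder, each of which a counter machine realizes with a constant number of auxiliary counters while the finite control records the current Turing-machine state. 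Arranging the counter machine to halt exactly when the Turing machine does yields undecidability of the halting problem for $k$-counter machines.

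The second and essential stage compresses $k$ counters down to two. Following Minsky, I would encode the whole vector of counter values $(n_1,\dots,n_k)$ as the single integer
\[
N \;=\; p_1^{\,n_1}\,p_2^{\,n_2}\cdots p_k^{\,n_k},
\]
for fixed distinct primes $p_1,\dots,p_k$, storing $N$ in one of the two counters and using the other as scratch. Unique factorization guarantees that $N$ determines the vector, so the simulation is faithful provided each primitive instruction can be mimicked. Here the match to the instruction set above is exact: incrementing $n_i$ becomes multiplying $N$ by $p_i$; and the combined instruction ``if $n_i=0$ goto $q_1$ else decrement $n_i$ and goto $q_2$'' becomes a single division-with-remainder of $N$ by $p_i$, since $n_i=0$ holds precisely when $p_i \nmid N$. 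Thus one gadget handles both the zero-test and the conditional decrement: if the remainder is nonzero we reconstruct $N$ and jump to $q_1$, and if it is zero we keep the quotient $N/p_i$ and jump to $q_2$. The finite control of the 2-counter machine is the product of the original control with the bounded bookkeeping these subroutines require, so it remains finite.

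The main obstacle, and the step demanding care, is verifying that each arithmetic subroutine can be executed \emph{destructively yet reversibly} with only two counters: a transfer loop empties one counter into the other $p_i$ units at a time, so the intermediate value lives entirely in the partner counter and the finite control can retain at most a bounded residue (the remainder modulo $p_i$); the value must then be moved back or rebuilt as dictated by the branch taken. I would spell out one representative gadget in full---division-with-remainder by $p_i$, which simultaneously implements the zero-test and the conditional decrement---as an explicit list of counter-machine states and transitions, and then observe that multiplication by $p_i$ is structurally a simpler transfer loop. Correctness of the encoding reduces to unique factorization, and composing the two stages gives a computable map from Turing machines to 2-counter machines that preserves halting; undecidability of the 2-counter halting problem then follows from that of the Turing-machine halting problem.
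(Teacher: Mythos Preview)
The paper does not prove this theorem at all: it is stated with the citation \cite{minsky1967computation} and used as a black-box input to the reduction establishing Theorem~\ref{thm:minskymachine}. Your proposal is a faithful outline of Minsky's original argument (Turing machine $\to$ $k$-counter machine via base-$|\Gamma|$ tape encoding, then $k$ counters $\to$ $2$ counters via the prime-power G\"odel encoding), and the sketch is correct as far as it goes. But relative to the paper you are supplying strictly more than is required: the intended ``proof'' here is simply the citation, and the surrounding appendix only uses the undecidability of the 2-counter halting problem as a known fact to which the reachability problem for the program class $\mathit{MM}$ is reduced.
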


It is easy to see that each program in $\mathit{MM}^n$ implements an $n$+1-state 2-register Minsky machine by noticing that the recursive functions $f_0, \dots, f_n$ represent the $n$+1 states with $f_0$ in particular representing the initial state $q_0$, each $c_{\it inc}^{i,j}$ implements the increment instruction that increments the register $r_i$ and goes to the state represented by $f_j$, and each $c_{\it dec}^{i,j,m}$ implements the check-and-decrement instruction that checks the register $r_i$, and goes to the state represented by $f_j$ if it is $0$ and otherwise decrements $r_i$ and goes to the state represented by $f_m$.  The halt instruction is implemented by $()$.
Therefore, a program in $\mathit{MM}$ returns $\tttrue$ iff the corresponding 2-register Minsky machine halts.  Thus, the halting problem for 2-register Minsky machines is reduced to the reachability problem for $\mathit{MM}$.  This completes the proof of Theorem~\ref{thm:minskymachine}.

\texcomment{
\begin{proof}
Let $c$ be a $\plnamenoeff$ program that is $\vdashst$-typable.  Then, a
$\vdashatm$ typing derivation for $c$ can be obtained by using pure effects for every type of a computation term in the $\vdashst$-typing derivation of $c$.  That is, a $\vdashst$ type ... can be seen as the $\vdashatm$ type ...
Conversely, let $c$ be a $\plnamenoeff$ program that is $\vdashatm$-typable.
Then, there exist a $\vdashatm$-typing derivation for $c$ that only uses pure effects.  Then, such a derivation can be converted to a $\vdashst$-typing derivation by making the aforementioned analogy.
\qed
\end{proof}
}

}{}

\end{document}